\documentclass[aps,prd,reprint,superscriptaddress,nofootinbib,longbibliography,floatfix]{revtex4-2}

\usepackage[T1]{fontenc}
\usepackage[utf8]{inputenc}
\usepackage{lmodern}
\usepackage{microtype}
\usepackage{amsmath,amssymb,amsthm,mathtools,bm}
\usepackage{booktabs}
\usepackage{graphicx}
\usepackage{enumitem}
\usepackage{aliascnt}
\usepackage{hyperref}
\usepackage[nameinlink,noabbrev]{cleveref}

\hypersetup{
 hidelinks,
 pdftitle={Finite Quantum Histories: Holonomy Spectra, Minimal Clocks, and Exact Clock-Change Covariance},
 pdfauthor={Maxim V. Churilov},
 pdfsubject={Exact finite relational histories and quantum clock transformations},
 pdfkeywords={relational time, quantum clocks, history states, monodromy, holonomy, quantum reference frames}
}
\ifdefined\pdfinfoomitdate\pdfinfoomitdate=1\fi
\ifdefined\pdftrailerid\pdftrailerid{}\fi
\ifdefined\pdfsuppressptexinfo\pdfsuppressptexinfo=-1\fi
\setlist{itemsep=0.15em,topsep=0.35em,parsep=0pt}
\allowdisplaybreaks
\theoremstyle{plain}
\newtheorem{theorem}{Theorem}[section]
\newaliascnt{proposition}{theorem}
\newtheorem{proposition}[proposition]{Proposition}
\aliascntresetthe{proposition}
\newaliascnt{lemma}{theorem}

\aliascntresetthe{lemma}
\newaliascnt{corollary}{theorem}
\newtheorem{corollary}[corollary]{Corollary}
\aliascntresetthe{corollary}
\newaliascnt{rigiditytheorem}{theorem}
\newtheorem{rigiditytheorem}[rigiditytheorem]{Rigidity Theorem}
\aliascntresetthe{rigiditytheorem}

\theoremstyle{definition}
\newaliascnt{definition}{theorem}
\newtheorem{definition}[definition]{Definition}
\aliascntresetthe{definition}
\newaliascnt{construction}{theorem}

\aliascntresetthe{construction}

\theoremstyle{remark}
\newaliascnt{remark}{theorem}
\newtheorem{remark}[remark]{Remark}
\aliascntresetthe{remark}
\newaliascnt{warning}{theorem}
\newtheorem{warning}[warning]{Warning}
\aliascntresetthe{warning}

\crefname{theorem}{Theorem}{Theorems}
\Crefname{theorem}{Theorem}{Theorems}
\crefname{rigiditytheorem}{Rigidity Theorem}{Rigidity Theorems}
\Crefname{rigiditytheorem}{Rigidity Theorem}{Rigidity Theorems}
\crefname{proposition}{Proposition}{Propositions}
\Crefname{proposition}{Proposition}{Propositions}
\crefname{lemma}{Lemma}{Lemmas}
\Crefname{lemma}{Lemma}{Lemmas}
\crefname{corollary}{Corollary}{Corollaries}
\Crefname{corollary}{Corollary}{Corollaries}
\crefname{definition}{Definition}{Definitions}
\Crefname{definition}{Definition}{Definitions}
\crefname{construction}{Construction}{Constructions}
\Crefname{construction}{Construction}{Constructions}
\crefname{remark}{Remark}{Remarks}
\Crefname{remark}{Remark}{Remarks}
\crefname{warning}{Warning}{Warnings}
\Crefname{warning}{Warning}{Warnings}

\newcommand{\C}{\mathbb C}

\newcommand{\Z}{\mathbb Z}
\newcommand{\id}{\mathbf 1}
\newcommand{\Hh}{\mathcal H}

\newcommand{\A}{\mathcal A}
\newcommand{\B}{\mathcal B}
\newcommand{\K}{\mathcal K}

\newcommand{\Tr}{\operatorname{Tr}}
\newcommand{\e}{\mathrm e}
\newcommand{\Fix}{\operatorname{Fix}}
\newcommand{\Ker}{\operatorname{ker}}
\newcommand{\spec}{\operatorname{spec}}

\newcommand{\PU}{\operatorname{PU}}
\newcommand{\norm}[1]{\left\lVert #1\right\rVert}
\newcommand{\abs}[1]{\left\lvert #1\right\rvert}
\newcommand{\ket}[1]{\lvert #1\rangle}
\newcommand{\bra}[1]{\langle #1\rvert}
\newcommand{\ketbra}[2]{\lvert #1\rangle\!\langle #2\rvert}

\newcommand{\Hist}{H_{\mathrm{hist}}}
\newcommand{\Jhist}{\mathsf J}
\newcommand{\ClockAlg}{\mathfrak C}

\makeatletter
\AtBeginDocument{%
  \@ifpackageloaded{hyperref}{\hypersetup{hidelinks}}{}%
}
\makeatother

\begin{document}

\title{Finite Quantum Histories: Holonomy Spectra, Minimal Clocks, and Exact Clock-Change Covariance}
\author{Maxim V. Churilov}
\email{churilovm1305@gmail.com}
\affiliation{Independent Researcher, Orenburg, Russia}
\date{June 22, 2026}

\begin{abstract}
A finite relational history is often written under the restrictive closure
assumption that one unitary step has finite order.  We remove that assumption
and solve the cyclic history problem for an arbitrary time-dependent sequence
of finite-dimensional unitary steps.  The propagation Hamiltonian is a
unitary connection Laplacian on a cycle.  Its complete gauge invariant is the
conjugacy class of the monodromy
$M=U_{L-1}\cdots U_0$, and its full spectrum is
\[
 \lambda_{a,k}=1-\cos\!\left(\frac{2\pi k-\theta_a}{L}\right),
 \qquad e^{i\theta_a}\in\spec M.
\]
Consequently, the exact history sector is isomorphic to $\Fix(M)$, the
frustration is determined by the monodromy eigenphases, and the spectral gap
above a nonempty zero-energy sector is obtained in closed form.  We also derive
an exact Chebyshev determinant and a Bessel--Wilson-loop expansion of the
finite-temperature trace.  The inverse spectral problem is solved as well:
ordinary spectral data recover precisely the multiset of monodromy phase
cosines and are blind to phase orientation.  Low energy
then certifies proximity to an exact relational history and controls all
conditional probabilities with nonvanishing clock weight.

We next define the predictive quotient of a sharp finite clock relative to an
accessible operator system.  It is the unique coarsest event alphabet that
preserves every conditional statistic in a chosen history sector.  We prove a
finite-error recovery theorem: if the minimum diamond separation of
inequivalent event channels exceeds four times the estimation error, simple
threshold clustering recovers the exact quotient, and no approximation with
error below half the separation margin can use fewer records.  For full
matrix access and a homogeneous step $U$, the minimal number of clock events
is the projective order of $U$, not its ordinary order.  We distinguish the
large kinematic normalizer of the clock algebra from the much smaller group
that preserves the coherent history code.  Among transformations that also
preserve the oriented successor relation, exact sharp clock changes are
classified by $U(r)\times\Z_L$ on a rank-$r$ history sector; preserving only
unoriented adjacency replaces $\Z_L$ by the dihedral group.  A reversible channel change
of full-information clock fibers is necessarily unitary, so irreversible
coarse-graining cannot be promoted to exact clock covariance.  Finally,
minimal realizations of a complete history Gram kernel are uniquely unitarily
equivalent, with a quantitative finite-data Procrustes bound.  Independent
finite-matrix code verifies the holonomy reduction, exact spectrum, kernel,
gap, clock compression, and clock-change classification.
\end{abstract}

\maketitle

\section{Introduction}

Relational quantum dynamics replaces an externally prescribed time parameter
by correlations between subsystems of a stationary physical state.  In the
Page--Wootters construction, conditional states relative to clock readings can
obey a Schr\"odinger law even when the total state is invariant under a
constraint \cite{PageWootters1983,Wootters1984}.  This idea has been sharpened
through relative-time observables, constrained quantization, and temporal
quantum reference frames
\cite{LoveridgeMiyadera2019,HoehnVanrietvelde2020,HoehnSmithLock2021,Chataignier2020}.
Quantum-reference-frame transformations make the covariance question
operational: a change of frame is physical only insofar as it preserves the
probabilities assigned to transformed preparations, events, and effects
\cite{Giacomini2019,Vanrietvelde2020,Glowacki2023,Carette2025}.

A complementary line of work encodes a sequence of unitary operations in a
stationary history state.  Feynman--Kitaev propagation Hamiltonians are central
to Hamiltonian complexity and autonomous quantum computation
\cite{Feynman1985,KitaevShenVyalyi2002,CahaLandauNagaj2018,Watson2019}.
Finite clocks can approximate continuous control with rapidly decreasing error
\cite{Woods2019}, while finite-resource temporal measurements can become
nonunitary or time nonlocal unless additional structure is imposed
\cite{SmithAhmadi2019,Hausmann2025}.  History-state Hamiltonians are generically
critical in growing families \cite{GonzalezCubitt2018}; an exact finite theorem
should therefore expose, rather than hide, its clock-length dependence.  A
companion manuscript studies permutation-covariant graph selection and two
special incidence-history Hamiltonians \cite{ChurilovPrelocality2026}.  Here we
remove all graph-specific structure and classify arbitrary finite-dimensional
unitary histories, their predictive clock quotients, and their exact clock-change
symmetries.

Three finite-dimensional questions remain logically distinct.

First, a cyclic history is commonly closed by assuming $U^L=\id$.  This is
unnecessary and obscures the true obstruction.  For time-dependent steps
$U_0,\ldots,U_{L-1}$, the invariant object is their monodromy.  The appropriate
Hamiltonian is a non-Abelian connection Laplacian on the cycle, closely related
to discrete magnetic Laplacians \cite{FabilaCarrasco2018,FabilaCarrasco2022}.
The exact history problem can therefore be solved by holonomy rather than by a
finite-order ansatz.

Second, a clock register can contain redundant labels.  Two readings should be
identified if every accessible conditional experiment gives the same answer at
both readings.  This operational quotient depends on the observable system,
not merely on the cardinality of the auxiliary register.  In a homogeneous
unitary history, the correct invariant is the order of the step in the
projective unitary group.

Third, covariance under simultaneous conjugation is too weak to classify
clock changes.  A generic unitary mixing different event subspaces converts a
sharp clock into an interference measurement.  Conversely, the full
normalizer of a sharp clock algebra contains independent fiber rotations that
do not preserve the coherent history code.  Exact clock covariance requires
both event admissibility and preservation of the physical history sector.

This paper provides one finite theorem package answering these questions.
The main results are:
\begin{enumerate}[label=(\roman*)]
\item a complete gauge classification of cyclic unitary histories by the
conjugacy class of their monodromy;
\item a closed formula for every eigenvalue and eigenvector of the cyclic
history Hamiltonian, including exact kernel and gap formulas;
\item quantitative certification of exact histories from low energy and a
finite-temperature history-weight estimate;
\item the predictive quotient and the projective-order theorem for minimal
sharp clocks;
\item a classification of exact sharp clock transformations preserving the
history code;
\item rigidity of reversible channel-valued clock changes;
\item exact and stable uniqueness of predictively complete minimal history
realizations.
\end{enumerate}

The claims are deliberately finite and exact.  We do not assert that every
physically reasonable clock is sharp, that all internal-time choices are
unitarily equivalent, or that a thermodynamic arrow follows from unitary
history propagation.  The multiple-choice problem can produce inequivalent
quantizations \cite{Malkiewicz2015}; the theorems below identify the precise
finite class in which exact equivalence is forced.

\section{Cyclic unitary histories}
\label{sec:setup}

Let $L\ge2$, let $\Hh_C=\C^L$ with orthonormal basis
$\{\ket t:t\in\Z_L\}$, and let $\Hh_D$ be a finite-dimensional data Hilbert
space of dimension $d$.  A cyclic unitary protocol is a sequence
\begin{equation}
 \boldsymbol U=(U_0,\ldots,U_{L-1}),
 \qquad U_t\in U(\Hh_D).
 \label{eq:unitary-protocol}
\end{equation}
Indices are taken modulo $L$.  Define the propagation differences
\begin{equation}
 A_t=\bra{t+1}\otimes\id-\bra t\otimes U_t
 \label{eq:At}
\end{equation}
and the positive cyclic history Hamiltonian
\begin{equation}
 \Hist(\boldsymbol U)
 =\frac12\sum_{t\in\Z_L}A_t^\dagger A_t.
 \label{eq:history-H}
\end{equation}
For $\ket\Psi=\sum_t\ket t\otimes\ket{\psi_t}$,
\begin{equation}
 \bra\Psi\Hist(\boldsymbol U)\ket\Psi
 =\frac12\sum_t\norm{\psi_{t+1}-U_t\psi_t}^2.
 \label{eq:energy-identity}
\end{equation}
Thus zero energy imposes exact conditional transport.

The ordered partial products and the monodromy are
\begin{align}
 V_0&=\id,
 &V_t&=U_{t-1}\cdots U_0 \quad (1\le t\le L-1),
 \label{eq:partial-products}\\
 M&=U_{L-1}\cdots U_0.
 \label{eq:monodromy}
\end{align}
No finite-order condition is imposed on $M$.

\begin{definition}[Vertex-wise gauge equivalence]
Two protocols $\boldsymbol U$ and $\boldsymbol U'$ are gauge equivalent if
there are unitaries $R_t\in U(\Hh_D)$ such that
\begin{equation}
 U_t'=R_{t+1}U_tR_t^\dagger
 \label{eq:gauge-links}
\end{equation}
for all $t\in\Z_L$.
\end{definition}
The associated block-diagonal unitary
$R=\sum_t\ketbra{t}{t}\otimes R_t$ conjugates the two history Hamiltonians.

\section{Holonomy normal form and exact spectrum}
\label{sec:spectrum}

\begin{theorem}[Holonomy normal form]
\label{thm:normal-form}
Let $M$ be the monodromy \eqref{eq:monodromy} and
\begin{equation}
 G=\sum_{t=0}^{L-1}\ketbra tt\otimes V_t.
 \label{eq:gauge-G}
\end{equation}
Then
\begin{equation}
 G^\dagger\Hist(\boldsymbol U)G
 =\Hist(\id,\ldots,\id,M).
 \label{eq:normal-form}
\end{equation}
Within vertex-wise gauge transformations, the conjugacy class of $M$ is a
complete invariant: two cyclic protocols are gauge equivalent if and only if
their monodromies are unitarily conjugate.
\end{theorem}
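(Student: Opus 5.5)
The plan has three parts: verify the conjugation identity \eqref{eq:normal-form} directly, then prove the two directions of the classification statement.

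\textbf{Normal form.} I will regard $G$ as the special gauge transformation of \eqref{eq:gauge-links} with $R_t=V_t^\dagger$. By the remark after that definition, such a block-diagonal unitary conjugates $\Hist(\boldsymbol U)$ into $\Hist(\boldsymbol U')$ with $U'_t=V_{t+1}^\dagger U_t V_t$. To make this concrete I will compute on the level of the differences \eqref{eq:At}:
\[
A_tG=\bra{t+1}\otimes V_{t+1}-\bra t\otimes U_tV_t=(V_{t+1}\otimes\text{id-on-clock})\bigl(\bra{t+1}\otimes\id-\bra t\otimes V_{t+1}^\dagger U_tV_t\bigr),
\]
where $V_{t+1}$ acts on $\Hh_D$ after the clock bra. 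The left unitary factor cancels in $G^\dagger A_t^\dagger A_tG$. For $0\le t\le L-2$ the definition \eqref{eq:partial-products} gives $V_{t+1}=U_tV_t$, so $U'_t=\id$. For $t=L-1$ the index $t+1$ is $0$ modulo $L$, and $V_0=\id$, so $U'_{L-1}=U_{L-1}V_{L-1}=M$. Summing over $t$ then gives \eqref{eq:normal-form}. The only point needing care is this wrap-around index, since that is where the monodromy appears. Equivalently, I can check the identity with the energy form \eqref{eq:energy-identity} and then polarize.

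\textbf{Invariance of the class of $M$.} Suppose $U'_t=R_{t+1}U_tR_t^\dagger$. Multiplying these relations along the cycle telescopes to $M'=R_LMR_0^\dagger$. Because indices are taken modulo $L$, $R_L=R_0$, and so $M'=R_0MR_0^\dagger$ is unitarily conjugate to $M$.

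\textbf{Completeness.} The argument above shows that every protocol is gauge equivalent to $(\id,\ldots,\id,M)$. Gauge equivalence is an equivalence relation, since the relations \eqref{eq:gauge-links} can be composed and inverted. It therefore suffices to show that $(\id,\ldots,\id,M)$ and $(\id,\ldots,\id,WMW^\dagger)$ are gauge equivalent. I will take the constant gauge $R_t=W$ for all $t$. It fixes every identity link and sends the last link to $WMW^\dagger$.

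I expect no real obstacle. The main risk is getting the bookkeeping conventions right, in particular the side on which $R_t$ acts and the cyclic index $L\equiv0$.
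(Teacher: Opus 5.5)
Your proof is correct. The normal-form computation matches the paper's. You do it at the operator level, $A_tG=V_{t+1}A'_t$ with $U'_t=V_{t+1}^\dagger U_tV_t$, rather than by substituting $\psi_t=V_t\phi_t$ into the energy form. This also makes explicit the link-level gauge relations and avoids the paper's implicit polarization step. Your invariance argument is identical to the paper's.

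The converse is organized differently. The paper builds the gauge directly by the recursion $R_{t+1}=U'_tR_tU_t^\dagger$ from the given $R_0$, and uses conjugacy of the monodromies to close it at $t=L-1$. You instead pass through the two normal forms, connect them by the constant gauge $R_t=W$, and invoke transitivity of gauge equivalence. Your composite gauge $R_t=V'_tWV_t^\dagger$ is exactly the closed-form solution of the paper's recursion. Your route reuses the normal-form computation and shows $(\id,\ldots,\id,M)$ as a canonical representative of each gauge class. The paper's recursion needs no normal form at all. Both are complete.
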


\begin{proof}
Write $\psi_t=V_t\phi_t$.  For $0\le t\le L-2$,
\begin{equation}
 \psi_{t+1}-U_t\psi_t
 =V_{t+1}(\phi_{t+1}-\phi_t),
\end{equation}
while the closing difference is
\begin{equation}
 \psi_0-U_{L-1}\psi_{L-1}
 =\phi_0-M\phi_{L-1}.
\end{equation}
Substitution in \eqref{eq:energy-identity} proves
\eqref{eq:normal-form}.

If \eqref{eq:gauge-links} holds, multiplication around the cycle gives
$M'=R_0MR_0^\dagger$.  Conversely, suppose
$M'=R_0MR_0^\dagger$.  Define recursively
$R_{t+1}=U_t'R_tU_t^\dagger$.  The assumed conjugacy of the monodromies makes
the recursion consistent at $t=L-1$, and \eqref{eq:gauge-links} follows.
\end{proof}

The theorem identifies a finite history with a unitary connection on a cycle.
All local link data can be gauged away; the sole obstruction is the holonomy
around the cycle.  Twisted and connection Laplacians are standard spectral
objects \cite{FabilaCarrasco2018,LinWanZhang2024,TorresHugas2026}; the result
below specializes their holonomy structure to a finite history code and then
combines it with operational clock compression and covariance.

\begin{theorem}[Complete spectrum]
\label{thm:complete-spectrum}
Let
\begin{equation}
 M\ket{a}=e^{i\theta_a}\ket a,
 \qquad -\pi<\theta_a\le\pi,
 \label{eq:monodromy-phases}
\end{equation}
where the eigenvectors form an orthonormal basis and phases are repeated with
multiplicity.  For $k=0,\ldots,L-1$, define
\begin{equation}
 q_{a,k}=\frac{2\pi k-\theta_a}{L}.
 \label{eq:quasimomentum}
\end{equation}
Then
\begin{equation}
 \lambda_{a,k}=1-\cos q_{a,k}
 \label{eq:exact-eigenvalues}
\end{equation}
are all eigenvalues of $\Hist(\boldsymbol U)$, with orthonormal eigenvectors
\begin{equation}
 \ket{\Phi_{a,k}}
 =\frac1{\sqrt L}\sum_{t=0}^{L-1}
 e^{iq_{a,k}t}\ket t\otimes V_t\ket a.
 \label{eq:exact-eigenvectors}
\end{equation}
\end{theorem}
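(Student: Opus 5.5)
By \Cref{thm:normal-form}, $\Hist(\boldsymbol U)=G\,\Hist(\id,\ldots,\id,M)\,G^\dagger$ with $G$ unitary, so the plan is to diagonalize the normal-form Hamiltonian $H_M:=\Hist(\id,\ldots,\id,M)$ and transport its eigenvectors back by $G$. Since $G(\ket t\otimes\ket a)=\ket t\otimes V_t\ket a$, the candidate vectors \eqref{eq:exact-eigenvectors} are exactly $G\ket{\chi_{a,k}}$ with
\begin{equation*}
 \ket{\chi_{a,k}}=\frac1{\sqrt L}\sum_{t=0}^{L-1}e^{iq_{a,k}t}\ket t\otimes\ket a ,
\end{equation*}
so it suffices to show that the $\ket{\chi_{a,k}}$ form an orthonormal eigenbasis of $H_M$ with eigenvalues $1-\cos q_{a,k}$.

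First I will reduce $H_M$ to a twisted scalar cycle. Expanding $\frac12\sum_tA_t^\dagger A_t$ with all links equal to $\id$ except the closing link $M$ gives
\begin{equation*}
 H_M=\id-\frac12(T+T^\dagger),
\end{equation*}
where $T$ is the twisted shift with $T(\ket t\otimes v)=\ket{t+1}\otimes v$ for $t\le L-2$ and $T(\ket{L-1}\otimes v)=\ket0\otimes Mv$. Here I use that each $A_t^\dagger A_t$ contributes $\frac12$ of the identity at two vertices, which sums to $\id$, plus the off-diagonal hopping terms. Because $T$ acts as $\id$ or $M$ on the data factor, every eigenspace $\C^L\otimes\ket a$ of $M$ is invariant under $T$. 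On that sector $T$ is the scalar twisted shift whose closing hop carries the phase $e^{i\theta_a}$.

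Next I will check that $T\ket{\chi_{a,k}}=e^{-iq_{a,k}}\ket{\chi_{a,k}}$. For the bulk hops, the coefficient at $t+1$ is $e^{iq(t)}=e^{-iq}e^{iq(t+1)}$. For the closing hop, the coefficient at $0$ is $e^{i\theta_a}e^{iq(L-1)}$, and this equals $e^{-iq}\cdot 1$ precisely because $e^{iqL}=e^{i(2\pi k-\theta_a)}=e^{-i\theta_a}$. The definition of $q_{a,k}$ is chosen exactly to make this boundary condition hold, and this is the only step needing care with signs. Since $T$ is unitary, $T^\dagger$ has the same eigenvector with eigenvalue $e^{iq_{a,k}}$. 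Hence $H_M\ket{\chi_{a,k}}=(1-\cos q_{a,k})\ket{\chi_{a,k}}$.

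Finally, I will verify orthonormality. Vectors with different $a$ are orthogonal through $\langle a|a'\rangle$. For fixed $a$, the inner product is $\frac1L\sum_t e^{2\pi i(k'-k)t/L}=\delta_{kk'}$, because the $\theta_a$ terms cancel. This gives $dL$ orthonormal eigenvectors in a space of dimension $dL$, so they form a complete eigenbasis, and the $\lambda_{a,k}$ counted with multiplicity are all the eigenvalues. Applying the unitary $G$ preserves orthonormality and transfers the result to $\Hist(\boldsymbol U)$. I expect no genuine obstacle; the main risk is the sign convention in the twisted boundary hop, which must be matched against $A_{L-1}=\bra0\otimes\id-\bra{L-1}\otimes M$.
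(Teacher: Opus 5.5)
Your proposal is correct and follows the paper's proof. Both reduce to the one-link normal form via \Cref{thm:normal-form}, use twisted Fourier modes $e^{iq_{a,k}t}$ on each monodromy eigenline with the boundary condition $e^{iqL}=e^{-i\theta_a}$, get completeness from $Ld$ orthonormal vectors, and conjugate back by $G$. Your explicit identity $H_M=\id-\frac12(T+T^\dagger)$, with $T$ the twisted shift, makes the eigenvalue step a direct operator computation, which is somewhat more rigorous than the paper's quadratic-form phrasing.
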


\begin{proof}
By \Cref{thm:normal-form}, it is enough to diagonalize the one-link normal
form.  On the monodromy eigenspace generated by $\ket a$, its quadratic form is
\begin{equation}
 \frac12\sum_{t=0}^{L-2}\abs{f_{t+1}-f_t}^2
 +\frac12\abs{f_0-e^{i\theta_a}f_{L-1}}^2.
 \label{eq:twisted-form}
\end{equation}
For $f_t=L^{-1/2}e^{iqt}$, all edges have the same phase difference provided
\begin{equation}
 e^{iqL}=e^{-i\theta_a},
\end{equation}
which is precisely \eqref{eq:quasimomentum}.  The discrete Laplacian then has
eigenvalue $1-\cos q$.  The $L$ Fourier modes are orthonormal for fixed $a$,
and distinct monodromy eigenvectors are orthogonal.  There are $Ld$ vectors,
so the list is complete.  Conjugating back by $G$ yields
\eqref{eq:exact-eigenvectors}.
\end{proof}

\begin{corollary}[Exact history sector]
\label{cor:kernel}
The zero-energy sector is
\begin{equation}
 \Ker\Hist(\boldsymbol U)
 =\left\{
 \frac1{\sqrt L}\sum_{t=0}^{L-1}\ket t\otimes V_t\ket\psi:
 \ket\psi\in\Fix(M)
 \right\}.
 \label{eq:history-kernel}
\end{equation}
In particular,
\begin{equation}
 \dim\Ker\Hist(\boldsymbol U)=\dim\Fix(M).
 \label{eq:kernel-dimension}
\end{equation}
A frustration-free cyclic history exists if and only if the monodromy has a
fixed vector.
\end{corollary}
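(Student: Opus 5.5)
The corollary follows by reading off the zero eigenvalues in \Cref{thm:complete-spectrum}, with a direct energy-identity argument as an independent check. Since the eigenvectors $\ket{\Phi_{a,k}}$ of \eqref{eq:exact-eigenvectors} form an orthonormal basis, $\Ker\Hist(\boldsymbol U)$ is the span of those $\ket{\Phi_{a,k}}$ with $\lambda_{a,k}=0$. The condition $1-\cos q_{a,k}=0$ means $q_{a,k}\in 2\pi\Z$. The bounds $-\pi<\theta_a\le\pi$ and $0\le k\le L-1$ give $q_{a,k}\in[-\pi/L,\,2\pi-\pi/L)$, so the only possibility is $q_{a,k}=0$. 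This forces $2\pi k=\theta_a$, and since $\theta_a\in(-\pi,\pi]$ it happens exactly when $k=0$ and $\theta_a=0$, that is, when $M\ket a=\ket a$.

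For each such $a$ the eigenvector reduces to $L^{-1/2}\sum_t\ket t\otimes V_t\ket a$. Taking linear combinations over the fixed eigenvectors $\ket a$, which span $\Fix(M)$ because the eigenbasis is orthonormal and includes a basis of the eigenvalue-$1$ eigenspace, gives \eqref{eq:history-kernel}. The map $\ket\psi\mapsto L^{-1/2}\sum_t\ket t\otimes V_t\ket\psi$ is an isometry, since each $V_t$ is unitary. Hence \eqref{eq:kernel-dimension} holds.

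The direct argument uses \eqref{eq:energy-identity}. Zero energy forces $\psi_{t+1}=U_t\psi_t$ for every $t$, hence $\psi_t=V_t\psi_0$ for $t\le L-1$, and the closing link forces $\psi_0=M\psi_0$. Conversely, any such vector has zero energy. This reproduces both the kernel description and the frustration-free criterion.

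The frustration-free statement is the equivalence $\Ker\Hist\neq\{0\}\iff\Fix(M)\neq\{0\}$, which is immediate from the dimension formula, since $\Hist\ge0$ means a zero-energy state is exactly a state annihilated by every $A_t$. The only delicate point is the phase-range bookkeeping that excludes $\theta_a=\pm2\pi$-type coincidences; the chosen branch $(-\pi,\pi]$ handles it.
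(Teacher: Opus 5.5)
Your proposal is correct and follows the paper's own route: you read off the vanishing eigenvalues $\lambda_{a,k}$ from \Cref{thm:complete-spectrum}, and the branch conventions $-\pi<\theta_a\le\pi$, $0\le k<L$ then force $k=0$ and $\theta_a=0$, while your isometry remark supplies the dimension count that the paper leaves implicit. Your supplementary argument from \eqref{eq:energy-identity} is also valid and self-contained, needing no spectral input at all: zero energy forces $\psi_t=V_t\psi_0$, the closing link forces $M\psi_0=\psi_0$, and conversely every such vector has zero energy.
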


\begin{proof}
An eigenvalue \eqref{eq:exact-eigenvalues} vanishes exactly when
$q_{a,k}=0$ modulo $2\pi$.  With $-\pi<\theta_a\le\pi$ and
$0\le k<L$, this occurs only for $\theta_a=0$ and $k=0$.
\end{proof}

\begin{corollary}[Exact gap and holonomy frustration]
\label{cor:gap}
Assume $\Fix(M)\ne\{0\}$.  Define
\begin{equation}
 \vartheta(M)=
 \begin{cases}
 2\pi,& M=\id,\\
 \displaystyle\min_{\theta_a\ne0}\abs{\theta_a},&M\ne\id.
 \end{cases}
 \label{eq:vartheta}
\end{equation}
Then the spectral gap above zero is exactly
\begin{equation}
 \Delta(\boldsymbol U)
 =1-\cos\!\left(\frac{\vartheta(M)}{L}\right).
 \label{eq:exact-gap}
\end{equation}
If $\Fix(M)=\{0\}$, the ground energy is
\begin{equation}
 E_0=1-\cos\!\left(
 \frac{\min_a\abs{\theta_a}}{L}
 \right).
 \label{eq:frustrated-ground}
\end{equation}
For $M=\id$,
\begin{equation}
 \Delta=1-\cos\frac{2\pi}{L}
 =\frac{2\pi^2}{L^2}+O(L^{-4}).
 \label{eq:identity-gap}
\end{equation}
\end{corollary}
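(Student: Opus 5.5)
The plan is to read everything off the explicit spectrum of \Cref{thm:complete-spectrum}. Since $\lambda_{a,k}=1-\cos q_{a,k}$ and $1-\cos x$ is even, $2\pi$-periodic and strictly increasing on $[0,\pi]$, each eigenvalue depends only on the distance $\delta_{a,k}=\min_{m\in\Z}\abs{q_{a,k}-2\pi m}\in[0,\pi]$, and is increasing in it. So the task reduces to minimizing $\delta_{a,k}$ over the relevant index pairs. With $q_{a,k}=(2\pi k-\theta_a)/L$, $0\le k<L$ and $-\pi<\theta_a\le\pi$, the numerator $2\pi k-\theta_a$ lies in $[-\pi,\,2\pi L-2\pi+\pi)$. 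Hence $q_{a,k}\in[-\pi/L,\,2\pi-\pi/L)$, and the nearest multiple of $2\pi$ is either $0$ or $2\pi$.

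\textbf{Case analysis by $k$.} For each fixed $a$:
\begin{itemize}
\item $k=0$ gives $\delta=\abs{\theta_a}/L$.
\item $k=L-1$ gives $2\pi-q=(2\pi+\theta_a)/L\ge\pi/L$.
\item For $1\le k\le L-1$ in general, $\delta=\min(2\pi k-\theta_a,\,2\pi(L-k)+\theta_a)/L\ge\pi/L$, with both terms at least $\pi$.
\end{itemize}
So every mode with $k\neq0$ has $\delta\ge\pi/L\ge\abs{\theta_b}/L$ for any $b$. More precisely, the smallest $k\ne0$ values are attained at $k=1$, giving $(2\pi-\theta_a)/L$, or at $k=L-1$, giving $(2\pi+\theta_a)/L$. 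Either way this is at least $(2\pi-\abs{\theta_a})/L$.

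\textbf{Gap formula.} Assume $\Fix(M)\neq\{0\}$. The zero modes are exactly $\theta_a=0$, $k=0$, as in \Cref{cor:kernel}. The nonzero eigenvalues come from two families.
\begin{itemize}
\item Pairs $(a,0)$ with $\theta_a\neq0$ give $\delta=\abs{\theta_a}/L$.
\item Pairs $(a,k)$ with $k\neq0$ give $\delta\ge(2\pi-\abs{\theta_a})/L\ge\pi/L$. For a fixed vector ($\theta_a=0$) with $k=1$, this equals $2\pi/L$ exactly.
\end{itemize}
If $M=\id$, only the second family exists, and its minimum is $2\pi/L$, attained at $\theta_a=0$, $k=1$. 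The general lower bound $(2\pi-\abs{\theta_a})/L$ cannot go lower because every $\theta_a=0$. This gives $\vartheta=2\pi$.

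If $M\neq\id$, the first family contributes $\min_{\theta_a\ne0}\abs{\theta_a}/L\le\pi/L$. Every second-family value is at least $\pi/L$, so the minimum is $\vartheta(M)/L$. Monotonicity of $1-\cos$ on $[0,\pi]$ then yields \eqref{eq:exact-gap}. The only delicate point is the tie at $\abs{\theta_a}=\pi$ (the edge value $\theta_a=\pi$ allowed by the branch convention), where both families give $\pi/L$. This is harmless because it produces the same number.

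\textbf{Frustrated case and identity asymptotics.} If $\Fix(M)=\{0\}$, the same comparison holds. The $k=0$ modes give $\abs{\theta_a}/L\le\pi/L$, and the $k\ne0$ modes are at least $\pi/L$. So the ground energy is $1-\cos(\min_a\abs{\theta_a}/L)$. Finally, \eqref{eq:identity-gap} follows from the Taylor expansion $1-\cos x=x^2/2-x^4/24+\dots$ at $x=2\pi/L$. I expect no real obstacle. The one step needing care is the bookkeeping that the $k\ne0$ branch never undercuts $\pi/L$, including the boundary case $k=L-1$ with $\theta_a$ near $-\pi$. I will handle it through the uniform bound $\delta_{a,k}\ge\pi/L$ for $k\ne0$, derived above.
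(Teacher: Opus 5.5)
Your proposal is correct and follows the same route as the paper. You read the eigenvalues off \Cref{thm:complete-spectrum}, show that for each phase the $k=0$ mode has reduced momentum $\abs{\theta_a}/L\le\pi/L$ while every $k\neq0$ mode has reduced momentum at least $(2\pi-\abs{\theta_a})/L\ge\pi/L$ (exactly $2\pi/L$ on $\Fix(M)$), and conclude by monotonicity of $1-\cos$ on $[0,\pi]$; your uniform bound $\delta_{a,k}\ge\pi/L$ for $k\neq0$, including the harmless tie at $\theta_a=\pi$, just spells out what the paper's terse proof leaves implicit.
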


\begin{proof}
For a nonzero monodromy phase, the smallest absolute representative of
$(2\pi k-\theta_a)/L$ is $\abs{\theta_a}/L$.  Within the fixed subspace,
the first nonconstant clock mode has momentum $2\pi/L$.  Since all nonzero
phases lie in $(0,\pi]$, the stated minimum follows.  The frustrated case is
the same minimization without a zero branch.
\end{proof}

\begin{figure}[t]
 \centering
 \includegraphics[width=0.98\columnwidth]{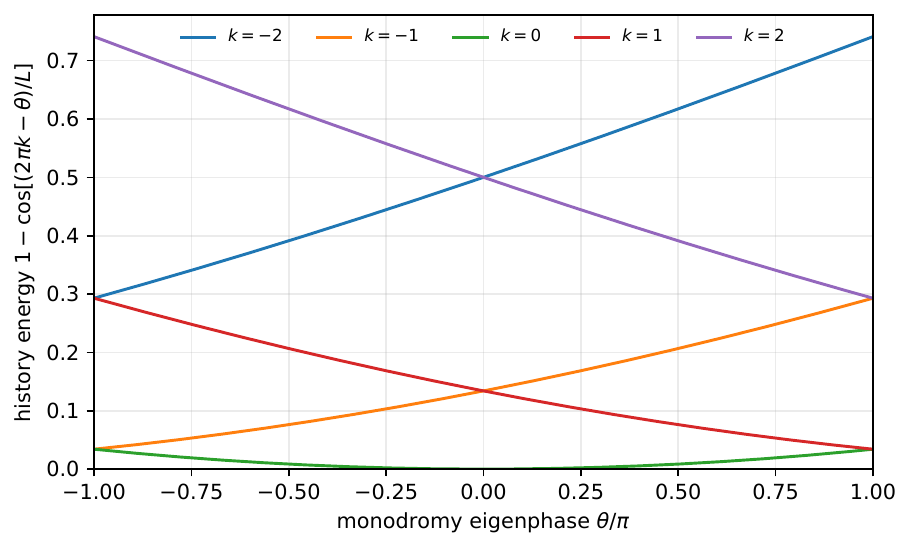}
 \caption{Lowest spectral branches for a cyclic history of length $L=12$.
 A monodromy eigenphase shifts the clock momentum.  Zero energy occurs only at
 a fixed monodromy vector; a small nonzero eigenphase produces a precisely
 controlled frustration energy.}
 \label{fig:spectral-branches}
\end{figure}

\begin{remark}[What the spectrum depends on]
The local distribution of the $U_t$ is gauge.  Spectrum, kernel, and gap depend
only on the eigenphases of the total monodromy.  This is stronger than the
usual statement that a history Hamiltonian can be reduced to a clock
Laplacian when the protocol closes exactly: the theorem also treats partially
closing and fully frustrated protocols without approximation.
\end{remark}

\begin{theorem}[Spectral determinant and thermal Wilson loops]
\label{thm:det-heat}
Let \(T_L\) be the Chebyshev polynomial of the first kind and let
\(I_n\) denote the modified Bessel function.  For every complex \(z\),
\begin{equation}
\begin{split}
 \det\!\left(z\id+\Hist(\boldsymbol U)\right)
 &=2^{d(1-L)}\\
 &\quad\times\det\!\left[
 T_L(1+z)\id-\frac{M+M^\dagger}{2}
 \right].
\end{split}
 \label{eq:cheb-determinant}
\end{equation}
For \(\beta\ge0\), the heat trace is the absolutely convergent series
\begin{equation}
 Z_L(\beta)
 :=\Tr\e^{-\beta\Hist(\boldsymbol U)}
 =Le^{-\beta}\sum_{m\in\mathbb Z}
 I_{mL}(\beta)\Tr(M^{-m}).
 \label{eq:heat-wilson}
\end{equation}
Thus the characteristic polynomial and the full finite-temperature partition
function depend only on gauge-invariant Wilson-loop moments of the monodromy.
\end{theorem}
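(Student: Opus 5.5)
We use the exact spectrum of \Cref{thm:complete-spectrum} and handle the two identities separately, one monodromy eigenvalue at a time. Both sides of \eqref{eq:cheb-determinant} and \eqref{eq:heat-wilson} are spectral functions of \(M\): the right-hand sides depend only on \(M+M^\dagger\) and on \(\Tr M^{-m}\). Diagonalizing \(M\) in the basis \(\ket a\) of \eqref{eq:monodromy-phases} therefore reduces each identity to a scalar statement for a single phase \(\theta=\theta_a\). We then take the product over \(a\) for the determinant and the sum over \(a\) for the trace.

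\emph{Determinant.} For fixed \(\theta\), the eigenvalues are \(1-\cos q_k\) with \(q_k=(2\pi k-\theta)/L\), so we need
\[
\prod_{k=0}^{L-1}\bigl(z+1-\cos q_k\bigr)=2^{1-L}\bigl(T_L(1+z)-\cos\theta\bigr).
\]
We prove this as a polynomial identity in \(w=1+z\). Both sides are monic-normalized degree-\(L\) polynomials in \(w\), since \(T_L\) has leading coefficient \(2^{L-1}\). The right-hand side vanishes exactly when \(T_L(w)=\cos\theta\). Setting \(w=\cos q\), this becomes \(\cos(Lq)=\cos\theta\), whose roots are \(q=(\pm\theta+2\pi k)/L\). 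The values \(\cos((2\pi k-\theta)/L)\), \(k=0,\dots,L-1\), exhaust these roots, because \(\cos\) is even and so the \(+\theta\) family gives the same set of cosines.

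The main obstacle is multiplicity. When \(\theta\in\{0,\pi\}\) the cosines \(\cos q_k\) coincide in pairs, and the simple argument from root sets no longer determines the polynomial. Two routes are available. One is continuity in \(\theta\): both sides are continuous in \(\theta\), and the identity is already established for generic \(\theta\), so it extends to all \(\theta\). The other is the classical factorization \(T_L(w)-\cos\theta=2^{L-1}\prod_k\bigl(w-\cos\frac{\theta+2\pi k}{L}\bigr)\), after reindexing \(k\to -k\). With the scalar identity in hand, taking the product over the \(d\) eigenvalues of \(M\) gives the prefactor \(2^{d(1-L)}\) and \(\det[T_L(1+z)\id-\tfrac{M+M^\dagger}{2}]\).

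\emph{Heat trace.} For fixed \(\theta\), we expand \(e^{\beta\cos q}=\sum_{n\in\Z}I_n(\beta)e^{inq}\) using the Jacobi–Anger generating function. This gives
\[
\sum_k e^{-\beta(1-\cos q_k)}=e^{-\beta}\sum_n I_n(\beta)e^{-in\theta/L}\sum_{k=0}^{L-1}e^{2\pi i nk/L}.
\]
The inner sum equals \(L\) when \(n=mL\) and vanishes otherwise, so the expression becomes \(Le^{-\beta}\sum_m I_{mL}(\beta)e^{-im\theta}\). Summing over \(a\) replaces \(e^{-im\theta_a}\) by \(\Tr M^{-m}\).

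Absolute convergence, which also justifies exchanging the finite \(k\)-sum with the series, follows from \(\sum_n|I_n(\beta)|=e^{\beta}\) for \(\beta\ge0\), since \(I_n(\beta)\ge0\), together with \(|\Tr M^{-m}|\le d\). The final remark of the theorem is then immediate: each identity depends on \(M\) only through \(M+M^\dagger\) and the traces \(\Tr M^{m}\), all of which are conjugation invariants.
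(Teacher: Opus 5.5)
Your proposal is correct and follows essentially the same route as the paper. You reduce via the exact spectrum of \Cref{thm:complete-spectrum} to a single monodromy phase, use the factorization of $T_L(w)-\cos\theta$ (which you additionally justify by root comparison, with continuity to handle $\theta\in\{0,\pi\}$), and then apply the Jacobi--Anger expansion with the clock sum selecting $n=mL$; your explicit convergence bound via $\sum_n I_n(\beta)=e^{\beta}$ and $\abs{\Tr M^{-m}}\le d$ is just a more concrete version of the paper's appeal to the Bessel expansion of an entire function.
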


\begin{proof}
For one monodromy eigenphase \(\theta\), \Cref{thm:complete-spectrum} gives
\begin{equation}
\begin{split}
 &\prod_{k=0}^{L-1}
 \left[z+1-\cos\!\left(\frac{2\pi k-\theta}{L}\right)\right]\\
 &\qquad=2^{1-L}\left[T_L(1+z)-\cos\theta\right].
\end{split}
 \label{eq:cheb-scalar}
\end{equation}
which is the standard factorization of \(T_L(x)-\cos\theta\).  Multiplying
Eq.~\eqref{eq:cheb-scalar} over all eigenphases proves
Eq.~\eqref{eq:cheb-determinant}.

For the heat trace, insert the Jacobi--Anger expansion
\(e^{\beta\cos q}=\sum_{n\in\mathbb Z}I_n(\beta)e^{inq}\) into the exact
spectrum and sum over \(k\).  The clock sum vanishes unless \(n=mL\), in
which case it equals \(L\).  Summing the remaining factor
\(e^{-im\theta_a}\) over \(a\) gives \(\Tr(M^{-m})\).  Absolute convergence
follows from the Bessel expansion of the entire function
\(e^{\beta\cos q}\).
\end{proof}

\begin{theorem}[Spectral inverse and orientation blindness]
\label{thm:spectral-inverse}
The ordinary spectrum of \(\Hist(\boldsymbol U)\) determines exactly the
multiset
\begin{equation}
 \bigl\{\cos\theta_a:a=1,\ldots,d\bigr\},
 \label{eq:cosine-holonomy-data}
\end{equation}
and no finer oriented phase data.  Equivalently, two monodromies give
isospectral history Hamiltonians if and only if their eigenphase cosines agree
with multiplicity.  In particular, independent replacements
\(\theta_a\mapsto-\theta_a\) leave every scalar spectral invariant unchanged.
The heat trace can be written
\begin{equation}
 Z_L(\beta)=Le^{-\beta}\left[
 dI_0(\beta)+2\sum_{m=1}^{\infty}I_{mL}(\beta)
 \operatorname{Re}\operatorname{Tr}(M^m)\right].
 \label{eq:heat-real}
\end{equation}
Thus oriented Wilson moments \(\operatorname{Im}\operatorname{Tr}(M^m)\)
require an additional orientation-sensitive observable; they cannot be
reconstructed from the ordinary spectrum or partition function.
\end{theorem}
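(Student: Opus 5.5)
The plan is to work entirely from \Cref{thm:complete-spectrum} and \Cref{thm:det-heat}, which reduce every scalar spectral quantity to the monodromy eigenphases. The argument has three steps.

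\emph{Step 1: the spectrum depends only on cosines.} The eigenvalues for a single phase $\theta$ form the multiset $\{1-\cos((2\pi k-\theta)/L):k\in\Z_L\}$. This multiset is invariant under $\theta\mapsto-\theta$, since reindexing $k\mapsto -k$ modulo $L$ gives $\cos((2\pi k+\theta)/L)=\cos((2\pi(-k)-\theta)/L)$. Hence the full spectrum of $\Hist(\boldsymbol U)$ is a function of the multiset \eqref{eq:cosine-holonomy-data}. Cleaner still, \eqref{eq:cheb-determinant} shows that the characteristic polynomial equals $2^{d(1-L)}\prod_a[T_L(1+z)-\cos\theta_a]$. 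This depends only on the cosines, and the characteristic polynomial determines the spectrum with multiplicity. This gives the ``if'' direction and orientation blindness, including independent sign flips of each $\theta_a$.

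\emph{Step 2: recovering the cosines from the spectrum.} Conversely, suppose two monodromies give the same spectrum. Then their characteristic polynomials in $z$ agree, so $\prod_a(w-c_a)=\prod_a(w-c'_a)$ as polynomials evaluated at $w=T_L(1+z)$. The constant factor $2^{d(1-L)}$ is the same on both sides, since $d$ is fixed; I would note that $d$ is itself recoverable as $\dim/L$. The key point is that $z\mapsto T_L(1+z)$ is a nonconstant polynomial, hence surjective onto $\C$. Two polynomials in $w$ that agree on the image therefore agree identically, so the monic polynomials $\prod_a(w-c_a)$ coincide and the cosine multisets are equal. I expect this to be the main step needing care. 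The alternative route through eigenvalue branches is messier because branches from different phases can coincide, and the determinant argument avoids that bookkeeping.

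\emph{Step 3: the real heat trace and the orientation limit.} For \eqref{eq:heat-real}, I would start from \eqref{eq:heat-wilson}, isolate $m=0$ (which gives $dI_0$), and pair $m$ with $-m$. Using $I_{-n}=I_n$ for integer $n$ and $\Tr(M^{-m})=\overline{\Tr(M^m)}$ for unitary $M$, each pair collapses to $2I_{mL}\operatorname{Re}\Tr(M^m)$. Since $\operatorname{Re}\Tr M^m=\sum_a\cos(m\theta_a)=\sum_a T_m(\cos\theta_a)$, the partition function is again a function of the cosine data alone.

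The final claim is that $\operatorname{Im}\Tr M^m$ is not reconstructible from the spectrum or partition function. To show this, I would exhibit two monodromies with equal cosine multisets but different imaginary moments, for instance $M=\mathrm{diag}(e^{i\theta})$ and $M'=\mathrm{diag}(e^{-i\theta})$ with $\sin\theta\neq0$. By Step 1 these are isospectral and have the same $Z_L$, yet $\operatorname{Im}\Tr M=-\operatorname{Im}\Tr M'\neq0$.
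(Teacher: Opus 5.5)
Your proposal is correct and follows essentially the paper's route. Both arguments use the Chebyshev determinant factorization together with injectivity of composition with the nonconstant polynomial $T_L(1+z)$ to recover the cosine multiset, and both obtain the real form of the heat trace by pairing the $\pm m$ terms of the Bessel--Wilson expansion using $I_{-n}=I_n$ and unitarity. Your surjectivity justification and your explicit $e^{\pm i\theta}$ counterexample spell out steps the paper leaves implicit.
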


\begin{proof}
Equation~\eqref{eq:cheb-determinant} has the form
\(C\,P[T_L(1+z)]\), where
\(P(y)=\prod_a(y-\cos\theta_a)\).  Composition with the nonconstant polynomial
\(T_L(1+z)\) is injective on polynomials, so the characteristic polynomial
uniquely determines \(P\), hence the multiset in
Eq.~\eqref{eq:cosine-holonomy-data}.  Conversely that multiset determines the
characteristic polynomial.  Pairing the \(m\) and \(-m\) terms in
Eq.~\eqref{eq:heat-wilson} and using unitarity gives Eq.~\eqref{eq:heat-real}.
\end{proof}

\section{Conditional dynamics and low-energy certification}
\label{sec:stability}

Let $r=\dim\Fix(M)$ and choose an isometry
$Q:\C^r\to\Hh_D$ with range $\Fix(M)$.  Define
\begin{equation}
 J_t=V_tQ,
 \qquad
 \Jhist=\frac1{\sqrt L}\sum_t\ket t\otimes J_t.
 \label{eq:history-isometry}
\end{equation}
Then $\Jhist^\dagger\Jhist=\id_r$ and
$\operatorname{Ran}\Jhist=\Ker\Hist$.
For a seed density operator $\sigma$ on $\C^r$, the stationary physical state
is
\begin{equation}
 \rho_{\mathrm{hist}}=\Jhist\sigma\Jhist^\dagger.
 \label{eq:mixed-history}
\end{equation}
The clock marginal is exactly uniform, and conditioning on $t$ gives
\begin{equation}
 \rho_{D|t}=J_t\sigma J_t^\dagger.
 \label{eq:conditional-state}
\end{equation}
Thus the global state is stationary under the constraint while its conditional
fibers follow the prescribed unitary protocol.

\begin{theorem}[Energy certification of an exact history]
\label{thm:energy-certification}
Let $P_0$ be the projection onto $\Ker\Hist$, let $\Delta>0$ be the exact gap
\eqref{eq:exact-gap}, and let $\ket\Psi$ be normalized with energy
$E=\bra\Psi\Hist\ket\Psi$.  Then
\begin{align}
 \norm{(\id-P_0)\Psi}^2&\le\frac{E}{\Delta},
 \label{eq:leakage-bound}\\
 \sum_t\norm{\psi_{t+1}-U_t\psi_t}^2&=2E.
 \label{eq:transport-defect}
\end{align}
If $E<\Delta$ and
\begin{equation}
 \ket\Phi=\frac{P_0\ket\Psi}{\norm{P_0\Psi}},
\end{equation}
then
\begin{equation}
 \frac12\norm{
 \ketbra\Psi\Psi-\ketbra\Phi\Phi
 }_1
 \le\sqrt{\frac{E}{\Delta}}.
 \label{eq:trace-history-bound}
\end{equation}
\end{theorem}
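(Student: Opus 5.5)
The plan is to read off all three statements from the spectral decomposition in \Cref{thm:complete-spectrum}, using the gap from \Cref{cor:gap}.

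\textbf{Leakage bound.} Since $\Hist\ge0$, its kernel is $\operatorname{Ran}P_0$. Every other eigenvalue is at least $\Delta$, so on the operator level
\begin{equation*}
 \Hist\ge\Delta(\id-P_0).
\end{equation*}
Taking the expectation in $\ket\Psi$ gives
\begin{equation*}
 E\ge\Delta\,\bra\Psi(\id-P_0)\ket\Psi=\Delta\norm{(\id-P_0)\Psi}^2,
\end{equation*}
which is \eqref{eq:leakage-bound}. Equation~\eqref{eq:transport-defect} is simply the energy identity \eqref{eq:energy-identity} multiplied by two, so nothing further is needed there.

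\textbf{Trace-distance bound.} Assume $E<\Delta$. Then \eqref{eq:leakage-bound} gives $\norm{P_0\Psi}^2\ge1-E/\Delta>0$, so $\ket\Phi$ is well defined. For pure states I will use the standard identity
\begin{equation*}
 \tfrac12\norm{\ketbra\Psi\Psi-\ketbra\Phi\Phi}_1=\sqrt{1-\abs{\langle\Phi|\Psi\rangle}^2}.
\end{equation*}
The overlap is
\begin{equation*}
 \langle\Phi|\Psi\rangle=\frac{\bra\Psi P_0\ket\Psi}{\norm{P_0\Psi}}=\norm{P_0\Psi},
\end{equation*}
so $1-\abs{\langle\Phi|\Psi\rangle}^2=\norm{(\id-P_0)\Psi}^2\le E/\Delta$, and \eqref{eq:trace-history-bound} follows.

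There is no real obstacle here. The only points requiring care are to state the pure-state trace-distance formula correctly, and to note that the gap $\Delta$ in \Cref{cor:gap} is exactly the smallest nonzero eigenvalue. The latter is what justifies the operator inequality $\Hist\ge\Delta(\id-P_0)$.
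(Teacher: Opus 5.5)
Your proof is correct and follows the paper's argument essentially verbatim. Both use the operator inequality $\Hist\ge\Delta(\id-P_0)$ from the spectral theorem, the energy identity \eqref{eq:energy-identity} for the transport defect, and the pure-state trace-distance formula with $\abs{\langle\Phi|\Psi\rangle}^2=\norm{P_0\Psi}^2$; you additionally make explicit two points the paper leaves implicit, namely that $E<\Delta$ forces $P_0\Psi\ne0$ and that $\Delta$ is exactly the smallest nonzero eigenvalue.
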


\begin{proof}
The spectral theorem gives
$\Hist\ge\Delta(\id-P_0)$, which proves
\eqref{eq:leakage-bound}.  Equation \eqref{eq:transport-defect} is the exact
quadratic-form identity \eqref{eq:energy-identity}.  The trace distance between
pure states is
$\sqrt{1-\abs{\langle\Phi|\Psi\rangle}^2}$, and
$\abs{\langle\Phi|\Psi\rangle}^2=\norm{P_0\Psi}^2$.
\end{proof}

The estimate is sharp for a superposition of one ground vector and one first
excited eigenvector.  It also converts an energy measurement into a uniform
bound on all unconditioned effects.

\begin{corollary}[Conditional-probability stability]
\label{cor:conditional-stability}
Under the hypotheses of \Cref{thm:energy-certification}, set
$\eta=\sqrt{E/\Delta}$.  Let $E_t=\ketbra tt\otimes\id$ and let
$0\le F_t\le E_t$ be any effect localized in clock fiber $t$.  If
$\eta<1/L$, then
\begin{equation}
 \left|
 \frac{\bra\Psi F_t\ket\Psi}{\bra\Psi E_t\ket\Psi}
 -
 \frac{\bra\Phi F_t\ket\Phi}{\bra\Phi E_t\ket\Phi}
 \right|
 \le\frac{2\eta}{L^{-1}-\eta}.
 \label{eq:conditional-bound}
\end{equation}
In particular, if $\eta\le(2L)^{-1}$, the right-hand side is at most
$4L\eta$.
\end{corollary}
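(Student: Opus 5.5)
The plan is to reduce everything to two scalar facts about the fiber $t$: the clock weight of $\Phi$ is exactly $1/L$, and both the numerator and the denominator change by at most $\eta$ when $\Phi$ is replaced by $\Psi$. The conditional-probability bound then follows from an elementary perturbation estimate for a ratio.

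\emph{Step 1 (exact clock weight).} Since $\Phi$ lies in $\Ker\Hist=\operatorname{Ran}\Jhist$, write $\Phi=\Jhist\chi$ with $\chi$ a unit vector in $\C^r$. Then
\[
 \bra\Phi E_t\ket\Phi=\tfrac1L\norm{J_t\chi}^2=\tfrac1L,
\]
because $J_t=V_tQ$ is an isometry. This is the uniform clock marginal noted after \eqref{eq:mixed-history}.

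\emph{Step 2 (effect perturbation).} For any effect $0\le X\le\id$, I will bound $\abs{\bra\Psi X\ket\Psi-\bra\Phi X\ket\Phi}$ by $\tfrac12\norm{\ketbra\Psi\Psi-\ketbra\Phi\Phi}_1$. This holds because the difference of the two pure states is traceless Hermitian and $0\le X\le\id$. Combined with \eqref{eq:trace-history-bound}, this bound is at most $\eta$. Applying it to $X=F_t$ and $X=E_t$ gives $\abs{a-a'}\le\eta$ and $\abs{b-b'}\le\eta$, where $a=\bra\Psi F_t\ket\Psi$, $b=\bra\Psi E_t\ket\Psi$, $a'=\bra\Phi F_t\ket\Phi$, and $b'=1/L$. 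The hypothesis $\eta<1/L$ guarantees $b\ge L^{-1}-\eta>0$, so the left conditional is well defined.

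\emph{Step 3 (ratio estimate).} Decompose
\[
 \frac ab-\frac{a'}{b'}=\frac{a-a'}{b}+a'\,\frac{b'-b}{b\,b'} .
\]
Since $F_t\le E_t$, we have $0\le a'\le b'$, so $a'/b'\le1$ and the second term is at most $\abs{b-b'}/b$. Each term is therefore bounded by $\eta/b\le\eta/(L^{-1}-\eta)$, and their sum gives \eqref{eq:conditional-bound}. For the final clause, $\eta\le(2L)^{-1}$ implies $L^{-1}-\eta\ge(2L)^{-1}$, so the right-hand side is at most $4L\eta$.

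The step that needs the most care is Step 3. Dividing by $b'$ or $b$ carelessly would produce an extra factor of $L$. Using $a'/b'\le1$, which comes from $F_t\le E_t$, is what keeps the bound at $2\eta/(L^{-1}-\eta)$ rather than something worse.
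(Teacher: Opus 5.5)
Your proposal is correct and follows essentially the same route as the paper. Both arguments use trace distance to bound the change in each effect by $\eta$, and the exact clock weight $\bra\Phi E_t\ket\Phi=L^{-1}$ to lower-bound the actual denominator by $L^{-1}-\eta$. Both then finish with a ratio split that uses $\bra\Phi F_t\ket\Phi\le L^{-1}$; you write out explicitly the decomposition the paper leaves implicit.
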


\begin{proof}
Trace distance bounds the probability difference of every effect by $\eta$.
For the exact history, $\bra\Phi E_t\ket\Phi=L^{-1}$.  Therefore the actual
denominator is at least $L^{-1}-\eta$.  Writing the difference of the two
ratios and using $\bra\Phi F_t\ket\Phi\le L^{-1}$ gives
\eqref{eq:conditional-bound}.
\end{proof}

\begin{proposition}[Finite-temperature history weight]
\label{prop:gibbs}
Assume the ground energy is zero with multiplicity $r\ge1$.  For the Gibbs
state
\begin{equation}
 \rho_\beta=\frac{e^{-\beta\Hist}}{\Tr e^{-\beta\Hist}},
\end{equation}
\begin{equation}
 1-\Tr(P_0\rho_\beta)
 \le\frac{Ld-r}{r}e^{-\beta\Delta}.
 \label{eq:gibbs-bound}
\end{equation}
\end{proposition}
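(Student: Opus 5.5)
We will work in the spectral decomposition supplied by \Cref{thm:complete-spectrum}. The eigenvalues are $\lambda_{a,k}$ for $a=1,\ldots,d$ and $k=0,\ldots,L-1$, giving $Ld$ eigenvalues counted with multiplicity. By the hypothesis and \Cref{cor:kernel}, exactly $r$ of them vanish, and $P_0$ is the sum of the corresponding eigenprojections. Write
\begin{equation*}
 \Tr e^{-\beta\Hist}=r+S,
 \qquad
 S=\sum_{\lambda_{a,k}>0}e^{-\beta\lambda_{a,k}},
\end{equation*}
so that $\Tr(P_0\rho_\beta)=r/(r+S)$. The leakage out of the history sector is then
\begin{equation*}
 1-\Tr(P_0\rho_\beta)=\frac{S}{r+S}.
\end{equation*}

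Next we bound $S$ from above. There are exactly $Ld-r$ positive eigenvalues. By \Cref{cor:gap}, each of them is at least $\Delta$, the exact gap \eqref{eq:exact-gap}. This gives $S\le(Ld-r)e^{-\beta\Delta}$ for $\beta\ge0$.

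Finally we use monotonicity. The inequality $S/(r+S)\le S/r$ holds because $S\ge0$. Combining this with the bound on $S$ gives $(Ld-r)e^{-\beta\Delta}/r$, which is \eqref{eq:gibbs-bound}.

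There is no real obstacle here. The only point requiring care is that the count of positive eigenvalues is exactly $Ld-r$, with multiplicities. This follows because the eigenvector list in \Cref{thm:complete-spectrum} is a complete orthonormal basis of $\C^L\otimes\Hh_D$, and the kernel has dimension $r=\dim\Fix(M)$.

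The convention $\vartheta(\id)=2\pi$ ensures that $\Delta$ is also the correct lower bound in the case $M=\id$. In that case all excited modes come from the $\theta_a=0$ branch and satisfy $\lambda\ge1-\cos(2\pi/L)$.
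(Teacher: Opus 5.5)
Your proof is correct and is essentially the paper's argument: you split the partition function into the kernel contribution $r$ and an excited sum bounded by $(Ld-r)e^{-\beta\Delta}$ using the exact gap, then use $S/(r+S)\le S/r$. Your added checks, the exact count of $Ld-r$ positive levels with multiplicity and the $M=\id$ convention for $\Delta$, are details the paper leaves implicit.
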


\begin{proof}
The ground contribution to the partition function is $r$.  Each of the
remaining $Ld-r$ levels contributes at most $e^{-\beta\Delta}$ after the
ground energy is subtracted.  Dividing the excited contribution by the full
partition function and then by its lower bound $r$ proves the claim.
\end{proof}

\section{Predictive quotient and minimal sharp clocks}
\label{sec:minimal-clock}

A clock label is physical only through the conditional distinctions it
supports.  Let $\A\subseteq\B(\Hh_D)$ be an operator system: a linear
self-adjoint subspace containing $\id$.  It represents the effects and
observables accessible to the data laboratory.  The history sector is still
encoded by the isometries $J_t:\C^r\to\Hh_D$ in
\eqref{eq:history-isometry}.

\begin{definition}[Predictive equivalence]
Two clock events $t,s\in\Z_L$ are predictively equivalent relative to
$\A$ if
\begin{equation}
 J_t^\dagger A J_t=J_s^\dagger A J_s
 \qquad\text{for every }A\in\A.
 \label{eq:predictive-equivalence}
\end{equation}
We write $t\sim_\A s$ and call
$\Z_L/{\sim_\A}$ the predictive quotient clock.
\end{definition}

The definition is sector-wide: it compares all seed states, not only one
chosen trajectory.  Indeed, \eqref{eq:predictive-equivalence} is equivalent to
\begin{equation}
 \Tr(J_t\sigma J_t^\dagger A)
 =\Tr(J_s\sigma J_s^\dagger A)
 \label{eq:all-seed-equality}
\end{equation}
for every seed density operator $\sigma$ and every $A\in\A$.

\begin{theorem}[Coarsest sufficient event alphabet]
\label{thm:predictive-quotient}
Let $\pi:\Z_L\to Q$ be any deterministic coarse-graining of clock labels.
All conditional expectations of $\A$ are well defined as functions of
$\pi(t)$ for every seed state if and only if
\begin{equation}
 \pi(t)=\pi(s)\quad\Longrightarrow\quad t\sim_\A s.
 \label{eq:coarse-condition}
\end{equation}
Consequently, the quotient map
$\Z_L\to\Z_L/{\sim_\A}$ is the unique coarsest deterministic clock record that
preserves every conditional statistic in the history sector.  Every
sufficient sharp event alphabet has at least
$\abs{\Z_L/{\sim_\A}}$ events.
\end{theorem}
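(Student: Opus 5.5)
The plan is to first make precise what ``well defined as a function of $\pi(t)$'' means, then prove the two implications, and finally read off the coarsest-record and cardinality statements. In the history state \eqref{eq:mixed-history} the clock marginal is uniform, so conditioning on the event $\pi(t)=x$ yields the equal-weight mixture $\abs{\pi^{-1}(x)}^{-1}\sum_{t\in\pi^{-1}(x)}J_t\sigma J_t^\dagger$. I will interpret sufficiency in the natural sense: for every seed $\sigma$ and every $A\in\A$, the fine conditional expectation $\Tr(J_t\sigma J_t^\dagger A)$ from \eqref{eq:conditional-state} depends on $t$ only through $\pi(t)$, and hence coincides with the coarse conditional expectation given $\pi(t)$. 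I will state this reading explicitly, since the rest of the proof is routine once it is fixed.

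For the ``if'' direction, assume \eqref{eq:coarse-condition}. Then every label in a fiber $\pi^{-1}(x)$ is predictively equivalent to every other. By \eqref{eq:all-seed-equality}, all the fine conditional expectations in that fiber are equal, so their fiber average equals each of them. The statistics are therefore a function of $\pi(t)$ alone.

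For the ``only if'' direction, suppose $\pi(t)=\pi(s)$ and that the conditional expectations are functions of the record. Then $\Tr(J_t\sigma J_t^\dagger A)=\Tr(J_s\sigma J_s^\dagger A)$ for all density operators $\sigma$ and all $A\in\A$. Density operators span all of $\B(\C^r)$, so $\Tr(\sigma(J_t^\dagger AJ_t-J_s^\dagger AJ_s))=0$ for every $\sigma$. This forces \eqref{eq:predictive-equivalence}, i.e.\ $t\sim_\A s$. The step needs care only if one wanted the weaker reading in which only the fiber averages are required to be well defined. Under that reading the condition is vacuous and the theorem would fail, which is why the interpretation must be pinned down first.

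For the consequences, note that $\sim_\A$ is an equivalence relation because it is defined by equalities. The condition \eqref{eq:coarse-condition} says exactly that the partition of $\Z_L$ induced by $\pi$ refines the partition into $\sim_\A$-classes. Equivalently, $\pi$ factors through the quotient map: there is a unique $\bar\pi$ on $\Z_L/{\sim_\A}$ with the quotient map equal to $\bar\pi\circ\pi$, up to relabeling of the image. The quotient map itself satisfies the condition, and any sufficient $\pi$ is at least as fine as it. The quotient is therefore the unique coarsest sufficient record, up to relabeling. Finally, since $\bar\pi$ is surjective onto the classes, $\abs{Q}\ge\abs{\pi(\Z_L)}\ge\abs{\Z_L/{\sim_\A}}$, which gives the lower bound on the number of events.
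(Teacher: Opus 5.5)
Your proof is correct and follows the paper's argument. The paper uses the same reading of sufficiency (the fine conditional statistic at $t$ depends only on $\pi(t)$), derives $t\sim_\A s$ from equality for all seeds via \eqref{eq:all-seed-equality}, for which your spanning argument is the justification, and concludes that every sufficient partition refines the $\sim_\A$-classes. One small wording slip: the quotient map factors through $\pi$ via a surjection $\bar\pi:\pi(\Z_L)\to\Z_L/{\sim_\A}$, not the other way round; your formula and the cardinality bound are nonetheless correct.
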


\begin{proof}
If $\pi(t)=\pi(s)$, a statistic depending only on the coarse label must have
the same value at $t$ and $s$ for all $A$ and all seed states.  By
\eqref{eq:all-seed-equality}, this is exactly
$t\sim_\A s$.  Conversely, if each fiber of $\pi$ is contained in one
predictive-equivalence class, the common conditional functional is independent
of the representative.  The quotient by the maximal equivalence classes is
therefore coarsest, and any other sufficient partition refines it.
\end{proof}

\begin{corollary}[Projective-order theorem]
\label{cor:projective-order}
Suppose $r=d$, $J_t=U^t$, and $\A=\B(\Hh_D)$.  Then
\begin{equation}
 t\sim_\A s
 \quad\Longleftrightarrow\quad
 U^{t-s}\in U(1)\id.
 \label{eq:projective-equivalence}
\end{equation}
If $p$ is the order of the projective class $[U]\in\PU(d)$, the minimal sharp
clock has exactly $p$ events.  A cycle of length $L=mp$ has redundancy factor
$m$ even when the ordinary order of $U$ is larger than $p$.
\end{corollary}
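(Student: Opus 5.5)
The proof reduces the predictive-equivalence condition \eqref{eq:predictive-equivalence} to a statement about inner automorphisms of the full matrix algebra, and then counts classes using \Cref{thm:predictive-quotient}. With $J_t=U^t$ and $\A=\B(\Hh_D)$, the relation $t\sim_\A s$ says $U^{-t}AU^{t}=U^{-s}AU^{s}$ for every $A$. Conjugating by $U^{s}$ on the left and $U^{-s}$ on the right, this is equivalent to $W^\dagger A W=A$ for all $A\in\B(\Hh_D)$, where $W=U^{t-s}$. Thus $W$ lies in the commutant of $\B(\Hh_D)$, which is $\C\id$. Since $W$ is unitary, $W\in U(1)\id$. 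The converse is immediate, because a scalar phase cancels under conjugation. This gives \eqref{eq:projective-equivalence}.

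Next I would describe the equivalence classes. The set $K=\{n\in\Z: U^n\in U(1)\id\}$ is a subgroup of $\Z$, and it equals $p\Z$ by the definition of the projective order $p$ of $[U]\in\PU(d)$. The class of $t$ is therefore $t+p\Z$, reduced modulo $L$.

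The one point needing care, and the main obstacle, is that $t$ and $s$ live in $\Z_L$, so $t-s$ is only defined modulo $L$. For the relation to be well defined I will use that the history sector exists with $r=d$, i.e.\ $\Fix(M)=\Hh_D$, so $M=U^L=\id$. Then $L\in K=p\Z$, so $p$ divides $L$ and $U^{t-s}$ up to phase is independent of representatives. The quotient $\Z_L/{\sim_\A}$ is then $\Z_L/p\Z_L\cong\Z_p$, which has exactly $p$ elements. By \Cref{thm:predictive-quotient}, this quotient is the coarsest sufficient alphabet, so the minimal sharp clock has exactly $p$ events.

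Finally, for $L=mp$ each class has $L/p=m$ elements, which is the redundancy factor. To show this can occur with the ordinary order of $U$ larger than $p$, I would take $U=e^{i\alpha}U_0$ with $U_0^p=\id$. For instance, take $U=\operatorname{diag}(1,-1)$ times the phase $i$ in $d=2$: then $p=2$, while the ordinary order is $4$. In that case some $U^n$ with $p\mid n$ is a nontrivial scalar, but it is still identified with the identity projectively. This example only illustrates the statement; the theorem itself is the counting argument above.
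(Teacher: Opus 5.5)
Your proof is correct and follows essentially the same route as the paper: with full matrix access, predictive equivalence is equality of the conjugation maps $A\mapsto U^{-t}AU^{t}$, so $U^{t-s}$ lies in the commutant $\C\id$ and the number of classes is the projective order of $[U]$. Your extra observation that $r=d$ forces $M=U^L=\id$ gives $p\mid L$, hence $\Z_L/{\sim_\A}\cong\Z_p$ with classes of size $m=L/p$; this makes explicit a well-definedness and counting step (including $p\le L$) that the paper's one-line conclusion leaves implicit.
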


\begin{proof}
Predictive equivalence for the full algebra is equality of conjugation maps:
$U^{-t}AU^t=U^{-s}AU^s$ for every $A$.  Hence $U^{t-s}$ commutes with the full
matrix algebra and is scalar.  The converse is immediate.  The quotient size
is therefore the order of $[U]$ in the projective unitary group.
\end{proof}

\begin{remark}[State-dependent compression]
For a single pure seed $\ket\psi$, fewer events may suffice: one only needs to
separate the projective orbit of $\ket\psi$.  The sector-wide quotient in
\Cref{cor:projective-order} is stronger because it preserves predictions for
all seed states and all data observables.  It is the appropriate notion when
the clock is part of a reusable physical theory rather than a record tailored
to one trajectory.
\end{remark}

\section{Robust recovery of the predictive quotient}
\label{sec:robust-quotient}

The exact quotient is stable whenever inequivalent clock events are separated
by a nonzero operational margin.  Let \(\mathcal C_t\) denote the conditional
channel from the seed system to the accessible output experiment at event
\(t\).  Define
\begin{equation}
 \gamma_\A=
 \min_{t\not\sim_\A s}
 \norm{\mathcal C_t-\mathcal C_s}_\diamond,
 \label{eq:predictive-margin}
\end{equation}
with the convention \(\gamma_\A=+\infty\) if there is only one quotient class.

\begin{theorem}[Finite-error quotient recovery]
\label{thm:robust-quotient}
Suppose estimates \(\widehat{\mathcal C}_t\) satisfy
\begin{equation}
 \norm{\widehat{\mathcal C}_t-\mathcal C_t}_\diamond\le\varepsilon
 \qquad(t\in\Z_L).
 \label{eq:channel-estimation-error}
\end{equation}
If \(\gamma_\A>4\varepsilon\), then every threshold
\begin{equation}
 2\varepsilon\le\tau<\gamma_\A-2\varepsilon
 \label{eq:quotient-threshold}
\end{equation}
recovers the exact predictive quotient from the rule
\[
 t\widehat\sim s
 \quad\Longleftrightarrow\quad
 \norm{\widehat{\mathcal C}_t-
 \widehat{\mathcal C}_s}_\diamond\le\tau.
\]
In particular, \(\widehat\sim\) is already an equivalence relation and equals
\(\sim_\A\).
\end{theorem}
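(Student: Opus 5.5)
The plan is a two-sided triangle-inequality argument. Its one structural input is the identification of predictive equivalence with equality of the conditional channels, that is,
\[
 t\sim_\A s\iff\norm{\mathcal C_t-\mathcal C_s}_\diamond=0 .
\]
I will take \(\mathcal C_t\) to be the map sending a seed \(\sigma\) to the functional \(A\mapsto\Tr(J_t\sigma J_t^\dagger A)\) on \(\A\). Concretely, this is the dual of the unital completely positive map \(A\mapsto J_t^\dagger AJ_t\) restricted to the operator system \(\A\). With this reading, \eqref{eq:all-seed-equality} says exactly that \(\mathcal C_t=\mathcal C_s\) as maps on all seeds. This is in turn equivalent to \eqref{eq:predictive-equivalence} by linearity, since density operators span all operators on \(\C^r\). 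Because the diamond norm is a norm, equality of the channels is the same as vanishing diamond distance. So the first step is to record this identification explicitly. If the accessible output experiment is modelled instead as a fixed informationally complete readout of \(\A\), the same equivalence holds, and I would remark on this.

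The second step handles equivalent pairs. Suppose \(t\sim_\A s\), so that \(\mathcal C_t=\mathcal C_s\). The triangle inequality with \eqref{eq:channel-estimation-error} gives
\[
 \norm{\widehat{\mathcal C}_t-\widehat{\mathcal C}_s}_\diamond
 \le\norm{\widehat{\mathcal C}_t-\mathcal C_t}_\diamond+\norm{\mathcal C_s-\widehat{\mathcal C}_s}_\diamond\le2\varepsilon\le\tau .
\]
Hence \(t\widehat\sim s\).

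The third step handles inequivalent pairs. Suppose \(t\not\sim_\A s\). The reverse triangle inequality and the definition \eqref{eq:predictive-margin} give
\[
 \norm{\widehat{\mathcal C}_t-\widehat{\mathcal C}_s}_\diamond
 \ge\norm{\mathcal C_t-\mathcal C_s}_\diamond-2\varepsilon\ge\gamma_\A-2\varepsilon>\tau .
\]
Hence \(t\not\widehat\sim s\). The threshold window \eqref{eq:quotient-threshold} is nonempty precisely because \(\gamma_\A>4\varepsilon\). In the degenerate case \(\gamma_\A=+\infty\) there are no inequivalent pairs, and step three is vacuous.

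Combining steps two and three shows that \(\widehat\sim\) and \(\sim_\A\) coincide as relations on \(\Z_L\). Since \(\sim_\A\) is an equivalence relation, so is \(\widehat\sim\), even though a distance-threshold rule is not transitive a priori. The main point requiring care is the first step, namely that the model of \(\mathcal C_t\) really makes zero diamond distance equivalent to \eqref{eq:predictive-equivalence} for the given operator system. The remaining steps are routine.
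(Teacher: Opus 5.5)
Your proposal is correct and follows the paper's proof exactly. Equivalent pairs have estimated distance at most \(2\varepsilon\le\tau\) by the triangle inequality, and inequivalent pairs have estimated distance at least \(\gamma_{\A}-2\varepsilon>\tau\) by the reverse triangle inequality; your first step only spells out the paper's tacit implication \(t\sim_{\A} s\Rightarrow\mathcal C_t=\mathcal C_s\), and only this forward direction is needed, because an inequivalent pair with equal channels would force \(\gamma_{\A}=0\), contradicting \(\gamma_{\A}>4\varepsilon\).
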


\begin{proof}
If \(t\sim_\A s\), then \(\mathcal C_t=\mathcal C_s\), so the estimated
distance is at most \(2\varepsilon\).  If \(t\not\sim_\A s\), the reverse
triangle inequality gives an estimated distance at least
\(\gamma_\A-2\varepsilon\).  The interval
\eqref{eq:quotient-threshold} therefore separates the two cases exactly.
\end{proof}

\begin{theorem}[Robust minimal-clock obstruction]
\label{thm:robust-clock-lower-bound}
Let a possibly approximate clock record \(q(t)\) assign to each event a
representative channel \(\mathcal D_{q(t)}\), and suppose
\begin{equation}
 \max_t\norm{\mathcal C_t-\mathcal D_{q(t)}}_\diamond\le\delta.
 \label{eq:approximate-clock-error}
\end{equation}
If \(2\delta<\gamma_\A\), then \(q\) cannot merge two distinct predictive
classes.  Hence every \(\delta\)-accurate clock has at least
\(\abs{\Z_L/{\sim_\A}}\) records.
\end{theorem}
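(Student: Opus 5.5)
The argument is a triangle inequality through the shared representative channel, followed by the counting statement of \Cref{thm:predictive-quotient}. Suppose, for contradiction, that \(q\) merges two inequivalent events. That is, there are \(t\not\sim_\A s\) with \(q(t)=q(s)\). Both events are then approximated by the same channel \(\mathcal D:=\mathcal D_{q(t)}=\mathcal D_{q(s)}\). Applying \eqref{eq:approximate-clock-error} to each event and the triangle inequality for the diamond norm gives
\begin{equation*}
 \norm{\mathcal C_t-\mathcal C_s}_\diamond
 \le\norm{\mathcal C_t-\mathcal D}_\diamond+\norm{\mathcal D-\mathcal C_s}_\diamond
 \le 2\delta .
\end{equation*}
The definition \eqref{eq:predictive-margin} of the margin, with \(t\not\sim_\A s\), gives \(\norm{\mathcal C_t-\mathcal C_s}_\diamond\ge\gamma_\A\). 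Combining the two yields \(\gamma_\A\le2\delta\), which contradicts the hypothesis \(2\delta<\gamma_\A\). The degenerate convention \(\gamma_\A=+\infty\) needs no argument: there is then only one class, so nothing can be merged.

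Consequently \(q(t)=q(s)\) implies \(t\sim_\A s\). In other words, every fiber of \(q\) lies inside a single predictive class, which is exactly the implication \eqref{eq:coarse-condition} with \(\pi=q\). Choosing one event \(t_c\) from each class \(c\) produces pairwise inequivalent events. Their records \(q(t_c)\) are therefore pairwise distinct, so \(c\mapsto q(t_c)\) is injective from \(\Z_L/{\sim_\A}\) into the record set. This gives at least \(\abs{\Z_L/{\sim_\A}}\) records, matching the counting conclusion of \Cref{thm:predictive-quotient}.

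The only point that needs care is that \(\mathcal C_t\) depends on \(t\) only through its predictive class. Then \(t\not\sim_\A s\) is exactly the situation in which the minimum in \eqref{eq:predictive-margin} applies, and equivalent events have identical channels, as already used in the proof of \Cref{thm:robust-quotient}. Once that is granted, the remaining steps are the triangle inequality and a pigeonhole count. The factor \(2\) in the hypothesis is precisely the constant produced by the triangle inequality, and it is what the abstract refers to as error below half the separation margin.
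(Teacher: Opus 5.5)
Your proposal is correct and matches the paper's proof: a shared record $q(t)=q(s)$ forces $\norm{\mathcal C_t-\mathcal C_s}_\diamond\le2\delta<\gamma_\A$ by the triangle inequality, which contradicts the definition of the margin for inequivalent events. The paper leaves implicit the pigeonhole count that you spell out. Your closing remark that $\mathcal C_t$ depends only on the predictive class is not needed here, because the lower bound $\gamma_\A$ on inequivalent pairs holds directly from the definition of the minimum.
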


\begin{proof}
If \(q(t)=q(s)\), the triangle inequality gives
\[
 \norm{\mathcal C_t-\mathcal C_s}_\diamond\le2\delta.
\]
This contradicts the definition of \(\gamma_\A\) whenever \(t\) and \(s\)
are inequivalent.
\end{proof}

The margin \(\gamma_\A\) separates a structural resource from a statistical
one.  Quotient size determines the exact event alphabet; the ratio
\(\gamma_\A/\varepsilon\) determines whether that alphabet is identifiable
from finite-precision channel data.

\section{Kinematic and physical clock changes}
\label{sec:clock-changes}

The exact history code occupies only an $r$-dimensional subspace of the
$Lr$-dimensional direct sum of its conditional fibers.  This distinction is
what reduces the clock-change group.

Define the fiber isometries
\begin{equation}
 \widetilde J_t:\C^r\to\Hh_C\otimes\Hh_D,
 \qquad
 \widetilde J_t\ket\psi=\ket t\otimes J_t\ket\psi,
 \label{eq:fiber-isometries}
\end{equation}
and the history support
\begin{equation}
 \widehat\K=\bigoplus_{t\in\Z_L}\widetilde J_t\C^r.
 \label{eq:history-support}
\end{equation}
On $\widehat\K$, let
\begin{equation}
 E_t=\widetilde J_t\widetilde J_t^\dagger,
 \qquad
 \ClockAlg=\operatorname{span}\{E_t:t\in\Z_L\}.
 \label{eq:clock-algebra}
\end{equation}
The coherent history code is
$\K_0=\operatorname{Ran}\Jhist\subset\widehat\K$.

\begin{proposition}[Kinematic normalizer]
\label{prop:kinematic-normalizer}
A unitary $W$ on $\widehat\K$ normalizes the sharp clock algebra,
$W\ClockAlg W^\dagger=\ClockAlg$, if and only if
\begin{equation}
 W=\sum_{t\in\Z_L}
 \widetilde J_{\sigma(t)}G_t\widetilde J_t^\dagger,
 \qquad
 \sigma\in S_L,
 \quad G_t\in U(r).
 \label{eq:block-monomial}
\end{equation}
Thus the kinematic normalizer is the wreath product
$U(r)^L\rtimes S_L$.  If adjacency on the unoriented cycle must be preserved,
$S_L$ is reduced to the dihedral group $D_L$; preserving orientation reduces
it to the cyclic subgroup $\Z_L$.
\end{proposition}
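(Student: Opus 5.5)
The plan is to work entirely inside the block decomposition of $\widehat\K$.

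First I will set up that decomposition. Each $\widetilde J_t$ is an isometry, since $J_t=V_tQ$ is, and distinct fibers are orthogonal because they carry distinct clock labels $\ket t$. Hence $\widehat\K=\bigoplus_t\operatorname{Ran}\widetilde J_t$ is an orthogonal direct sum of $r$-dimensional blocks. The $E_t$ are the corresponding mutually orthogonal projections, and they sum to $\id_{\widehat\K}$. So $\ClockAlg$ is the commutative $*$-algebra $\C^L$, and its minimal projections are exactly the $E_t$.

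For the ``if'' direction, suppose $W$ has the form \eqref{eq:block-monomial}. Using $\widetilde J_t^\dagger\widetilde J_s=\delta_{ts}\id_r$, a direct computation gives $WE_tW^\dagger=\widetilde J_{\sigma(t)}G_tG_t^\dagger\widetilde J_{\sigma(t)}^\dagger=E_{\sigma(t)}$. Thus $W$ permutes the spanning projections and normalizes $\ClockAlg$. Unitarity of $W$ follows from the same orthogonality relations.

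For the ``only if'' direction, suppose $W\ClockAlg W^\dagger=\ClockAlg$. Then $\operatorname{Ad}_W$ is a $*$-automorphism of $\ClockAlg$, so it maps minimal projections to minimal projections. The key step is to check this carefully. Each $WE_tW^\dagger$ is a projection lying in $\ClockAlg$, so it equals $\sum_{s\in S}E_s$ for some set $S$. Minimality is preserved because $\operatorname{Ad}_W$ has an inverse $\operatorname{Ad}_{W^\dagger}$ that also preserves $\ClockAlg$. Alternatively, one can use a rank count: $\operatorname{rank}WE_tW^\dagger=r$, which forces $|S|=1$. Either way, $WE_tW^\dagger=E_{\sigma(t)}$ for a map $\sigma$, and $\sigma$ is a permutation because $\sum_tWE_tW^\dagger=\id$.

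From this, $WE_t=E_{\sigma(t)}W$, so $W$ maps $\operatorname{Ran}E_t$ isometrically onto $\operatorname{Ran}E_{\sigma(t)}$. I then set $G_t=\widetilde J_{\sigma(t)}^\dagger W\widetilde J_t$. This is an isometry of $\C^r$, hence unitary. Inserting $\id=\sum_tE_t$ on the right of $W$ reconstructs \eqref{eq:block-monomial}.

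Next I identify the group structure. The map $W\mapsto\sigma$ is a homomorphism onto $S_L$, since every permutation with $G_t=\id$ is realized. Its kernel consists of the block-diagonal elements $\sum_t\widetilde J_tG_t\widetilde J_t^\dagger$, which form a copy of $U(r)^L$. The permutation unitaries $\sum_t\widetilde J_{\sigma(t)}\widetilde J_t^\dagger$ give a splitting. Composing two monomial unitaries shows that conjugation by such a unitary acts on the kernel by $G_t\mapsto G_{\sigma^{-1}(t)}$. This yields the wreath product $U(r)^L\rtimes S_L=U(r)\wr S_L$.

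For the final clause, I read ``preserving adjacency'' as requiring that $\sigma$ be an automorphism of the cycle graph $C_L$. The automorphism group of $C_L$ is $D_L$ for $L\ge3$; the degenerate case $L=2$ gives $\Z_2$. Orientation-preserving automorphisms, meaning those that send successors to successors, are the rotations $t\mapsto t+c$, which form $\Z_L$.

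The only point needing care is the passage from $W\ClockAlg W^\dagger=\ClockAlg$ to an actual permutation of the $E_t$. The rank argument settles it, because all fibers have the same dimension $r$.
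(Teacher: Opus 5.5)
Your proof is correct and follows the same route as the paper. The $E_t$ are the minimal projections of the abelian algebra $\ClockAlg$, any normalizer permutes them, the restricted blocks $G_t=\widetilde J_{\sigma(t)}^\dagger W\widetilde J_t$ are unitary, and the converse is a direct computation; your additions (the rank count forcing $WE_tW^\dagger=E_{\sigma(t)}$, which tacitly uses $r\ge1$, the explicit semidirect-product action $G_t\mapsto G_{\sigma^{-1}(t)}$, and the degenerate $L=2$ case) only fill in details the paper leaves implicit.
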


\begin{proof}
The $E_t$ are the minimal projections of the finite abelian algebra
$\ClockAlg$.  Any algebra normalizer permutes them, giving $\sigma$.  The
restriction of $W$ from the $r$-dimensional range of $E_t$ to that of
$E_{\sigma(t)}$ is a unitary $G_t$, yielding \eqref{eq:block-monomial}.  The
converse is immediate.  The automorphism group of a cycle graph is $D_L$, and
its orientation-preserving subgroup is $\Z_L$.
\end{proof}

The independent $G_t$ are kinematically allowed but generally destroy the
coherent relation among fibers.  Exact relational covariance requires
preservation of $\K_0$.

\begin{rigiditytheorem}[Exact sharp clock changes]
\label{thm:sharp-clock-rigidity}
Let $W$ normalize $\ClockAlg$.  Then
\begin{equation}
 W\K_0=\K_0
 \label{eq:preserve-code}
\end{equation}
if and only if there exist one permutation $\sigma\in S_L$ and one unitary
$R\in U(r)$ such that
\begin{equation}
 W=W_{\sigma,R}:=
 \sum_{t\in\Z_L}
 \widetilde J_{\sigma(t)}R\widetilde J_t^\dagger.
 \label{eq:physical-clock-change}
\end{equation}
Equivalently,
\begin{equation}
 W_{\sigma,R}\Jhist=\Jhist R.
 \label{eq:history-intertwining}
\end{equation}
Consequently, without additional clock-geometry constraints the exact
code-preserving sharp clock-change group is
\begin{equation}
 U(r)\times S_L.
 \label{eq:unordered-group}
\end{equation}
If the oriented successor relation $t\mapsto t+1$ must be preserved, the group
reduces to
\begin{equation}
 U(r)\times\Z_L,
 \label{eq:oriented-group}
\end{equation}
and preserving only unoriented cycle adjacency gives $U(r)\times D_L$.
\end{rigiditytheorem}

\begin{proof}
Every $W$ satisfying \eqref{eq:preserve-code} induces a unique unitary
$R=\Jhist^\dagger W\Jhist$ on the seed space.  Hence
$W\Jhist=\Jhist R$.  Write $W$ in the block-monomial form
\eqref{eq:block-monomial}.  Comparing the component in the
$\sigma(t)$ fiber on both sides gives
\begin{equation}
 G_t=R
\end{equation}
when expressed in the canonical fiber coordinates supplied by
$\widetilde J_t$; in ambient notation this is exactly
\eqref{eq:physical-clock-change}.  Conversely, summing
\eqref{eq:physical-clock-change} over $t$ proves
\eqref{eq:history-intertwining}.

Composition obeys
$W_{\sigma,R}W_{\tau,S}=W_{\sigma\circ\tau,RS}$, which proves
\eqref{eq:unordered-group}.  Restriction to oriented or unoriented cycle
automorphisms gives \eqref{eq:oriented-group} and its dihedral counterpart.
\end{proof}

\begin{corollary}[Covariance of conditional probabilities]
\label{cor:clock-covariance}
Let $\rho=\Jhist\varrho\Jhist^\dagger$ be a physical history state and let
$W=W_{\sigma,R}$ be an exact sharp clock change.  The transformed conditional
state in event $\sigma(t)$ is
\begin{equation}
 J_{\sigma(t)}R\varrho R^\dagger J_{\sigma(t)}^\dagger.
 \label{eq:transformed-conditional}
\end{equation}
If an effect is simultaneously transported by the corresponding fiber
unitary, every conditional probability is unchanged.  The clock change acts
only by event relabeling and one seed-space frame transformation.
\end{corollary}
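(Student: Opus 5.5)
The plan is a direct computation with the intertwining identity \eqref{eq:history-intertwining} from \Cref{thm:sharp-clock-rigidity} together with the fiber isometries \eqref{eq:fiber-isometries}.

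\emph{Transformed state.} Since $W_{\sigma,R}\Jhist=\Jhist R$, the transformed state is
\[
W\rho W^\dagger=\Jhist R\varrho R^\dagger\Jhist^\dagger .
\]
This is again a physical history state, now with seed $R\varrho R^\dagger$. By \eqref{eq:conditional-state}, its unnormalized conditional fiber at any event $s$ is $L^{-1}J_sR\varrho R^\dagger J_s^\dagger$, and the clock marginal stays uniform. Taking $s=\sigma(t)$ gives \eqref{eq:transformed-conditional}.

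To make the event relabeling explicit, I will instead use the block form. We have
\[
W(\ket t\otimes J_t\psi)=\ket{\sigma(t)}\otimes J_{\sigma(t)}R\psi .
\]
So the component of $\rho$ in fiber $t$, namely $L^{-1}\widetilde J_t\varrho\widetilde J_t^\dagger$, is carried into fiber $\sigma(t)$ as $L^{-1}\widetilde J_{\sigma(t)}R\varrho R^\dagger\widetilde J_{\sigma(t)}^\dagger$. Off-diagonal fiber blocks are transported the same way. Since $\sigma$ is a bijection and $E_{\sigma(t)}W=WE_t$, conditioning on $\sigma(t)$ after the change is the same as conditioning on $t$ before it.

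\emph{Invariance of probabilities.} I will define the fiber unitary precisely as the map from fiber $t$ to fiber $\sigma(t)$:
\[
\Omega_t=\widetilde J_{\sigma(t)}R\widetilde J_t^\dagger .
\]
This is a unitary between the ranges of $E_t$ and $E_{\sigma(t)}$, so it agrees with $W$ on $\operatorname{Ran}E_t$. For an effect $0\le F_t\le E_t$, the transported effect is $F'_{\sigma(t)}=\Omega_tF_t\Omega_t^\dagger$. Two trace identities then follow from cyclicity of the trace and $\Omega_t^\dagger\Omega_t=E_t$:
\[
\Tr(W\rho W^\dagger F'_{\sigma(t)})=\Tr(\rho F_t),
\qquad
\Tr(W\rho W^\dagger E_{\sigma(t)})=\Tr(\rho E_t)=L^{-1}.
\]
Their ratio is the original conditional probability, so every conditional probability is unchanged.

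\emph{Structural claim.} By the rigidity theorem, $W$ is completely determined by the pair $(\sigma,R)$. Its action on history states is therefore a relabeling of events by $\sigma$ together with one seed-space frame change $\varrho\mapsto R\varrho R^\dagger$, which is exactly the final sentence of the statement.

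The main care point is bookkeeping. The phrase ``transported by the corresponding fiber unitary'' must be read as $\Omega_t$ restricted to $\operatorname{Ran}E_t$, and one must check that $W$ maps fiber blocks to fiber blocks without cross terms. Both follow from the block-monomial form \eqref{eq:block-monomial} with $G_t=R$.
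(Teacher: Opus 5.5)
Your proposal is correct and follows the same route as the paper. The intertwining identity $W_{\sigma,R}\Jhist=\Jhist R$ gives the transformed global state, projection onto the $\sigma(t)$ fiber gives the conditional state, and invariance of the Born rule under simultaneous conjugation gives unchanged conditional probabilities; you make the last step explicit through $\Omega_t=\widetilde J_{\sigma(t)}R\widetilde J_t^\dagger$, the relation $E_{\sigma(t)}W=WE_t=\Omega_t$, and the normalization $\Tr(W\rho W^\dagger E_{\sigma(t)})=L^{-1}$, which the paper leaves implicit.
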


\begin{proof}
Equation \eqref{eq:history-intertwining} gives the transformed global state.
Projection onto the $\sigma(t)$ fiber gives
\eqref{eq:transformed-conditional}.  Invariance of the Born rule under
simultaneous unitary conjugation proves the last statement.
\end{proof}

\begin{warning}[Arbitrary basis rotation is not a sharp clock change]
A unitary that coherently mixes different $E_t$ without normalizing
$\ClockAlg$ defines a different POVM with interference between event labels.
It may be physically meaningful, but it is not a re-expression of the same
sharp conditional question.  Conversely, a normalizer with independent
$G_t$ is a valid kinematic relabeling of fibers but is not an exact symmetry of
the history code unless the $G_t$ are transported copies of one $R$ as in
\eqref{eq:physical-clock-change}.
\end{warning}

\section{Predictively complete histories}
\label{sec:gram}

The sharp-clock classification assumes a fixed finite event algebra.  A more
general comparison can be made directly from all intervention histories.
Let $\mathcal W$ be a set of finite words labeling preparations,
interventions, and outcomes.

\begin{definition}[History realization]
A history realization is a Hilbert space $\K$ and a map
$\xi:\mathcal W\to\K$.  Its history Gram kernel is
\begin{equation}
 K(w,w')=\langle\xi(w')|\xi(w)\rangle.
 \label{eq:gram-kernel}
\end{equation}
The realization is minimal if
$\operatorname{span}\{\xi(w):w\in\mathcal W\}=\K$.
\end{definition}

The kernel contains not only classical probabilities but relative phases and
all interference data encoded by the chosen history vectors.  The next result
is the finite-dimensional uniqueness theorem for minimal Kolmogorov
realizations of a positive kernel \cite{Aronszajn1950}.

\begin{theorem}[Unitary uniqueness of complete minimal histories]
\label{thm:gram-uniqueness}
Let $(\K_1,\xi_1)$ and $(\K_2,\xi_2)$ be minimal finite-dimensional
realizations of the same Gram kernel:
\begin{equation}
 \langle\xi_1(w')|\xi_1(w)\rangle
 =\langle\xi_2(w')|\xi_2(w)\rangle
 \label{eq:same-gram}
\end{equation}
for all $w,w'$.  Then there is a unique unitary
$U_{21}:\K_1\to\K_2$ satisfying
\begin{equation}
 U_{21}\xi_1(w)=\xi_2(w)
 \label{eq:unitary-realizations}
\end{equation}
for every $w$.  If appending a symbol $a$ is represented by linear operators
$T_i(a)\xi_i(w)=\xi_i(wa)$, then
\begin{equation}
 U_{21}T_1(a)=T_2(a)U_{21}.
 \label{eq:word-intertwining}
\end{equation}
For three minimal realizations, the transition unitaries obey the exact cocycle
law $U_{31}=U_{32}U_{21}$.
\end{theorem}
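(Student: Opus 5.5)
The plan is the standard Kolmogorov/Moore--Aronszajn construction: define $U_{21}$ on spanning vectors and use the Gram identity \eqref{eq:same-gram} to show it is well defined and isometric. Concretely, on finite linear combinations I set
\[
 U_{21}\Bigl(\sum_j c_j\xi_1(w_j)\Bigr)=\sum_j c_j\xi_2(w_j).
\]
The key step, and the only real point, is well-definedness. I expand $\norm{\sum_j c_j\xi_2(w_j)}^2$ as the double sum $\sum_{j,k}\bar c_k c_j\langle\xi_2(w_k)|\xi_2(w_j)\rangle$. By \eqref{eq:same-gram} this equals the same expression with $\xi_1$, so the two norms agree. Hence if $\sum_j c_j\xi_1(w_j)=0$, then $\sum_j c_j\xi_2(w_j)=0$. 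So $U_{21}$ is a well-defined linear map on $\operatorname{span}\xi_1(\mathcal W)=\K_1$, by minimality, and it is isometric.

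Surjectivity follows because its range contains every $\xi_2(w)$, which span $\K_2$ by minimality. Finite dimensionality means there is no closure issue. So $U_{21}$ is unitary and satisfies \eqref{eq:unitary-realizations}. For uniqueness, any two unitaries satisfying \eqref{eq:unitary-realizations} agree on a spanning set and are linear, so they coincide.

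For the intertwining relation \eqref{eq:word-intertwining}, I check it on the spanning vectors $\xi_1(w)$:
\[
 U_{21}T_1(a)\xi_1(w)=U_{21}\xi_1(wa)=\xi_2(wa)=T_2(a)\xi_2(w)=T_2(a)U_{21}\xi_1(w).
\]
Linearity and the spanning property then extend the identity to all of $\K_1$. This implicitly requires $wa\in\mathcal W$ whenever the hypothesis $T_i(a)\xi_i(w)=\xi_i(wa)$ is invoked, which I read as part of the assumption.

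For the cocycle law, $U_{32}U_{21}$ is unitary and sends $\xi_1(w)$ to $\xi_3(w)$ for every $w$. The uniqueness clause applied to the pair $(\K_1,\K_3)$ then forces $U_{31}=U_{32}U_{21}$. I expect no genuine obstacle beyond the well-definedness computation above, since everything else reduces to "agree on a spanning set."
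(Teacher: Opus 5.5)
Your proposal is correct and follows the paper's proof essentially step for step: linear extension on the spanning vectors, well-definedness and isometry from the common Gram form, minimality to get the full domain and range, uniqueness from agreement on a spanning set, intertwining checked on the generators $\xi_1(w)$, and the cocycle law from uniqueness. Your explicit note that $wa\in\mathcal W$ must hold whenever $T_i(a)\xi_i(w)=\xi_i(wa)$ is invoked is a fair reading of a hypothesis the paper leaves implicit.
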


\begin{proof}
On finite linear combinations define
\begin{equation}
 U_{21}^{(0)}\sum_w c_w\xi_1(w)
 =\sum_w c_w\xi_2(w).
\end{equation}
If the left combination vanishes, the squared norm of the right combination
is the same quadratic form in the common Gram kernel and also vanishes.  The
map is therefore well defined and preserves inner products.  Minimality makes
its domain and range the full spaces, so it extends to a unitary.  Uniqueness
holds on the spanning set.  Equation \eqref{eq:word-intertwining} follows on
generators, and the cocycle law follows from uniqueness.
\end{proof}

\begin{proposition}[Finite-data stability]
\label{prop:gram-stability}
Let $w_1,\ldots,w_N$ span two finite history realizations, and let
$G_i\in M_N(\C)$ be their Gram matrices.  After adjoining zero directions if
necessary, there exists a partial isometry $U$ such that
\begin{equation}
 \sum_{n=1}^N
 \norm{U\xi_1(w_n)-\xi_2(w_n)}^2
 \le\norm{G_1-G_2}_1.
 \label{eq:procrustes-bound}
\end{equation}
\end{proposition}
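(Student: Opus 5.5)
The plan is to turn the statement into a matrix Procrustes problem and close it with the Powers--St\o rmer inequality. For $i=1,2$ let $X_i:\C^N\to\K_i$ be the synthesis operator $X_i e_n=\xi_i(w_n)$, where $e_n$ is the standard basis of $\C^N$. Then $G_i=X_i^\dagger X_i$, and the left-hand side of \eqref{eq:procrustes-bound} equals $\norm{UX_1-X_2}_2^2$ in Hilbert--Schmidt norm. ``Adjoining zero directions'' means enlarging $\K_1$ and $\K_2$ by orthogonal summands. After this, both have dimension at least $N$ and each $X_i$ has an isometric extension available in its polar decomposition.

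The first step is polar decomposition. Write
\[
X_i=V_iG_i^{1/2},
\]
where $V_i:\C^N\to\K_i$ is a partial isometry with initial space $\overline{\operatorname{Ran}G_i^{1/2}}$. After the padding, each $V_i$ can be extended to a full isometry of $\C^N$ into $\K_i$, and I will use these extensions. The candidate map is
\[
U=V_2V_1^\dagger .
\]
This is a partial isometry from $\K_1$ to $\K_2$, with initial space $\operatorname{Ran}V_1$. Because $V_1^\dagger V_1=\id_N$, we get $UX_1=V_2G_1^{1/2}$. Hence
\[
UX_1-X_2=V_2\bigl(G_1^{1/2}-G_2^{1/2}\bigr),
\]
and since $V_2$ is an isometry,
\[
\norm{UX_1-X_2}_2^2=\norm{G_1^{1/2}-G_2^{1/2}}_2^2 .
\]
If I did not extend the $V_i$, $V_2$ would only be a contraction. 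That still gives the inequality $\le$, which is all that is needed.

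The last step, and the only nontrivial one, is the Powers--St\o rmer inequality: for positive matrices $A,B$,
\[
\norm{A^{1/2}-B^{1/2}}_2^2\le\norm{A-B}_1 .
\]
I expect this to be the main obstacle, because the rest is bookkeeping. I would either cite it or prove it directly. For the direct proof, set $C=A^{1/2}-B^{1/2}$ and let $P$ be the spectral projection of $C$ onto $[0,\infty)$. The identity
\[
A-B=\tfrac12\bigl[C(A^{1/2}+B^{1/2})+(A^{1/2}+B^{1/2})C\bigr]
\]
gives
\[
\Tr\bigl[(P-P^\perp)(A-B)\bigr]=\Tr\bigl[\abs{C}(A^{1/2}+B^{1/2})\bigr],
\]
using cyclicity and $(P-P^\perp)C=\abs{C}$. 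Next, $A^{1/2}+B^{1/2}\ge\abs{C}$ holds in the weaker sense needed here: $\Tr[\abs{C}(A^{1/2}+B^{1/2})]\ge\Tr[\abs{C}\,\abs{C}]$, because $\Tr[\abs{C}B^{1/2}]\ge0$ and $\Tr[\abs{C}A^{1/2}]\ge\Tr[\abs{C}C]$ after splitting along $P$. This yields $\norm{C}_2^2\le\Tr[(P-P^\perp)(A-B)]\le\norm{A-B}_1$. Applying it with $A=G_1$ and $B=G_2$ gives \eqref{eq:procrustes-bound}.
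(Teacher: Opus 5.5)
Your proof is correct. It uses the same Procrustes framing and the same Powers--St\o{}rmer inequality as the paper, but you choose a different alignment.

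\emph{How the routes differ.} The paper takes the optimal unitary alignment and identifies the minimum with the squared Bures distance $\Tr G_1+\Tr G_2-2\Tr(G_1^{1/2}G_2G_1^{1/2})^{1/2}$, which it then bounds by Powers--St\o{}rmer. You write down the explicit, generally suboptimal, alignment $U=V_2V_1^\dagger$ from the two polar decompositions. It yields the Hellinger-type quantity $\norm{G_1^{1/2}-G_2^{1/2}}_2^2=\Tr G_1+\Tr G_2-2\Tr(G_1^{1/2}G_2^{1/2})$. Because $\Tr(G_1^{1/2}G_2^{1/2})\le\norm{G_1^{1/2}G_2^{1/2}}_1$, this is at least the Bures quantity. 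Powers--St\o{}rmer in its original form bounds exactly your quantity, so nothing is lost.

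\emph{What each route buys.} Yours is more elementary. It gives $U$ explicitly and avoids the variational identity $\max_U\operatorname{Re}\Tr(UX_1X_2^\dagger)=\norm{G_1^{1/2}G_2^{1/2}}_1$, which the paper takes as known. The paper's route identifies the best achievable alignment error, which can be strictly smaller, though the final bound is the same.

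\emph{Your aside on unextended $V_i$.} If neither $V_i$ is extended, $V_2V_1^\dagger$ is only a contraction and need not be a partial isometry. The partial-isometry claim therefore does use the padding; extending either $V_1$ or $V_2$ suffices.

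\emph{Correction to your direct proof of Powers--St\o{}rmer.} If you prove the inequality rather than cite it, the last step needs fixing. Write $S=A^{1/2}+B^{1/2}$ and $C=C_+-C_-$ with $C_+=PC$, $C_-=-P^\perp C$. The two facts you list, $\Tr[\abs{C}B^{1/2}]\ge0$ and $\Tr[\abs{C}A^{1/2}]\ge\Tr[\abs{C}C]$, only give $\Tr[\abs{C}S]\ge\Tr[\abs{C}C]=\Tr C_+^2-\Tr C_-^2$. This falls short of $\norm{C}_2^2$ by $2\Tr C_-^2$. The estimate must treat the two spectral blocks asymmetrically:
\begin{itemize}
\item On the positive block, drop $B^{1/2}$: $\Tr[C_+(S-C)]=2\Tr[C_+B^{1/2}]\ge0$.
\item On the negative block, drop $A^{1/2}$: $\Tr[C_-(S+C)]=2\Tr[C_-A^{1/2}]\ge0$.
\end{itemize}
Adding these gives $\Tr[\abs{C}S]\ge\Tr[C_+C]-\Tr[C_-C]=\Tr C_+^2+\Tr C_-^2=\norm{C}_2^2$. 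With this correction the self-contained argument goes through. Citing the inequality, as the paper does, is also fine.
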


\begin{proof}
Let $X_i$ be the matrices whose columns are the history vectors, so
$X_i^\dagger X_i=G_i$.  The unitary Procrustes minimum is the squared Bures
distance of the Gram matrices,
\begin{align}
 \min_U\norm{UX_1-X_2}_2^2
 &=\Tr G_1+\Tr G_2 \notag\\
 &\quad-2\Tr\!\left(G_1^{1/2}G_2G_1^{1/2}\right)^{1/2}.
\end{align}
The Powers--St\o{}rmer inequality bounds this quantity by
$\norm{G_1-G_2}_1$ \cite{Bhatia1997}.  The Frobenius norm is the sum in
\eqref{eq:procrustes-bound}.
\end{proof}

The exact theorem says that two finite clocks carrying the same complete
history kernel are not different theories: after redundant directions are
removed, they are one unitary realization.  The stability theorem also makes
the limitation explicit.  Incomplete or ill-conditioned tomography need not
produce a well-controlled clock transformation even when observed kernels are
close.

\section{Channel rigidity and the boundary of covariance}
\label{sec:channel-rigidity}

For unsharp or coarse clocks, a change of temporal description may naturally
be represented by channels rather than unitaries.  Exact covariance, however,
requires reversible preservation of all states in both directions.

\begin{rigiditytheorem}[Reversible quantum channels are unitary]
\label{thm:channel-rigidity}
Let
\begin{equation}
 \Gamma:M_d(\C)\to M_e(\C),
 \qquad
 \Lambda:M_e(\C)\to M_d(\C)
\end{equation}
be completely positive trace-preserving maps satisfying
\begin{equation}
 \Lambda\circ\Gamma=\id_{M_d},
 \qquad
 \Gamma\circ\Lambda=\id_{M_e}.
 \label{eq:mutual-channel-inverse}
\end{equation}
Then $d=e$ and there is a unitary $U\in U(d)$ such that
\begin{equation}
 \Gamma(\rho)=U\rho U^\dagger,
 \qquad
 \Lambda(\rho)=U^\dagger\rho U.
 \label{eq:unitary-channel}
\end{equation}
\end{rigiditytheorem}

\begin{proof}
The two maps are affine inverses between the full state spaces.  Their real
affine dimensions are $d^2-1$ and $e^2-1$, hence $d=e$.  Let $\{A_i\}$ and
$\{B_j\}$ be Kraus families for $\Gamma$ and $\Lambda$.  The composition
$\Lambda\circ\Gamma$ has Kraus operators $B_jA_i$.  The identity channel has
Choi rank one, so every $B_jA_i$ is proportional to $\id$.  At least one
product, say $B_{j_0}A_{i_0}$, is nonzero and therefore invertible.  For every
$i$, $B_{j_0}A_i$ is scalar; multiplying by $B_{j_0}^{-1}$ shows that all
$A_i$ are proportional to $A_{i_0}$.  Thus $\Gamma$ has Kraus rank one:
$\Gamma(\rho)=A\rho A^\dagger$.  Trace preservation gives
$A^\dagger A=\id$, and because $A$ is square, it is unitary.  The formula for
$\Lambda$ follows from \eqref{eq:mutual-channel-inverse}.  This is the standard
reversibility rigidity of finite quantum channels
\cite{NielsenChuang2010}.
\end{proof}

\begin{corollary}[No exact covariance by irreversible clock compression]
\label{cor:no-coarse-covariance}
Suppose two clock fibers retain the full matrix state space and are declared
exactly equivalent by mutually inverse CPTP transformations.  Then the fiber
change is unitary.  Any genuinely irreversible coarse-graining, dephasing, or
dimension reduction can define at most a one-way simulation or an approximate
clock relation, not exact covariance.
\end{corollary}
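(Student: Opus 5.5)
The plan is to reduce the corollary directly to \Cref{thm:channel-rigidity}. Model each clock fiber by the full matrix algebra of its data space, $M_d(\C)$ for the first fiber and $M_e(\C)$ for the second. This is what ``retains the full matrix state space'' should mean. The hypothesis that the two fibers are declared exactly equivalent by mutually inverse CPTP transformations is then the pair of maps $\Gamma$, $\Lambda$ satisfying \eqref{eq:mutual-channel-inverse}. Applying \Cref{thm:channel-rigidity} gives $d=e$ and a unitary $U$ with $\Gamma=U(\cdot)U^\dagger$ and $\Lambda=U^\dagger(\cdot)U$. So the fiber change is unitary, which proves the first assertion. It is also worth noting that, by \Cref{cor:clock-covariance}, such a unitary fiber change preserves every conditional probability once effects are transported alongside, so it is a genuine covariance.

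The second assertion is the contrapositive. Suppose a coarse-graining, dephasing, or dimension reduction $\Gamma$ is genuinely irreversible, meaning it is not a unitary conjugation. By the rigidity theorem, no CPTP map $\Lambda$ can satisfy both identities in \eqref{eq:mutual-channel-inverse}. I would make the three cases concrete:
\begin{enumerate}[label=(\alph*)]
\item Dimension reduction gives $e<d$ and contradicts $d=e$ directly.
\item Dephasing in a basis is idempotent but not injective on $M_d(\C)$, since it kills off-diagonal matrix units, so it has no left inverse at all.
\item A coarse-graining with Kraus rank larger than one is excluded by the Choi-rank step in the proof of \Cref{thm:channel-rigidity}.
\end{enumerate}

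The remaining phrase ``at most a one-way simulation or an approximate clock relation'' needs a justification that such maps are still available even though two-sided inversion is not. A one-sided relation is still possible: for an isometric embedding $V$, $\Lambda\circ\Gamma=\id$ can hold even though $\Gamma\circ\Lambda\ne\id$. Approximate relations of the kind in \Cref{thm:robust-clock-lower-bound} are likewise unaffected, since they never require exact inverses.

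The main obstacle is interpretive rather than technical: the statement has to be fixed precisely. I would state explicitly that ``exact equivalence'' means two-sided inversion on the full state spaces, not merely on the history code $\K_0$ or on a restricted operator system $\A$. On a restricted subspace, a nonunitary channel can act reversibly, for instance through a decoherence-free code. The hypothesis ``full matrix state space'' is exactly what excludes this, so the corollary then follows immediately from \Cref{thm:channel-rigidity}.
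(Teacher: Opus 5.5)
Your proposal is correct and matches the paper's approach. The paper gives no separate proof and treats the corollary as an immediate consequence of \Cref{thm:channel-rigidity}, exactly as you do: identify the fibers with $M_d(\C)$ and $M_e(\C)$, read the hypothesis as \eqref{eq:mutual-channel-inverse}, conclude $d=e$ and unitarity, and take the contrapositive for the interpretive second half; your case analysis and your caveat that exactness must hold on the full state spaces rather than on $\K_0$ or a code subspace are sound additions. One small correction: invariance of conditional probabilities under a general unitary fiber change follows directly from Born-rule invariance under simultaneous conjugation, not from \Cref{cor:clock-covariance}, which concerns only the code-preserving maps $W_{\sigma,R}$.
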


This result separates two notions often conflated under ``change of clock.''
A reversible exact change belongs to the unitary class already classified in
\Cref{thm:sharp-clock-rigidity}.  A nonunitary relation may still be useful,
but its lost distinctions must be recorded as a physical approximation.

\section{Examples}
\label{sec:examples}

\subsection{A partially closing qubit history}

Let $L\ge2$ and take a monodromy
\begin{equation}
 M=\ketbra00+e^{i\theta}\ketbra11,
 \qquad 0<\abs\theta\le\pi.
 \label{eq:qubit-monodromy}
\end{equation}
Any sequence of links with this product is gauge equivalent.  The history
Hamiltonian has one exact zero mode,
\begin{equation}
 \ket{\Psi_0}
 =\frac1{\sqrt L}\sum_t\ket t\otimes V_t\ket0,
\end{equation}
and a second low branch with ground energy
\begin{equation}
 1-\cos(\theta/L).
\end{equation}
The exact gap above zero is
\begin{equation}
 \min\left\{1-\cos(2\pi/L),\,1-\cos(\abs\theta/L)\right\}
 =1-\cos(\abs\theta/L).
\end{equation}
Thus an arbitrarily small monodromy phase creates an arbitrarily soft
orthogonal history sector without changing any local link norm.

\subsection{A clock longer than its predictive period}

Let
\begin{equation}
 U=\operatorname{diag}(1,\omega,\omega^2),
 \qquad \omega=e^{2\pi i/5},
\end{equation}
and choose $L=15$.  Although the auxiliary clock has 15 labels, the projective
order is five.  The full conditional channel repeats after five steps, so the
predictive quotient has five events and redundancy factor three.  Retaining
all 15 labels can still be useful as a computational record, but it is not a
minimal physical clock for the data algebra.

\subsection{Gauge distribution does not alter the spectrum}

Fix any target monodromy $M$ and choose arbitrary unitaries
$U_0,\ldots,U_{L-2}$.  Setting
\begin{equation}
 U_{L-1}=M(U_{L-2}\cdots U_0)^\dagger
\end{equation}
produces the same exact spectrum for every choice of the first $L-1$ links.
The links can be individually far apart in operator norm, yet the history
spectra coincide because they represent the same cycle holonomy.

\section{Verification and falsification protocol}
\label{sec:verification}

The source package includes \texttt{verify\_finite\_histories.py}.  The
verification method is calibrated before it is applied to the new formulas:
for identity links it reproduces the textbook spectrum of the cycle Laplacian
with maximum absolute error $2.45\times10^{-15}$.

The independent tests then perform the following checks.
\begin{enumerate}[label=(\roman*)]
\item Random Haar protocols with $(L,d)=(3,2),(4,3),(7,2),(8,3)$ are
diagonalized directly.  Their full spectra agree with
\eqref{eq:exact-eigenvalues} to maximum error
$1.34\times10^{-15}$.
\item The block gauge transformation \eqref{eq:gauge-G} agrees with the
one-link monodromy normal form in operator norm to
$3.93\times10^{-15}$.
\item The Chebyshev determinant identity \eqref{eq:cheb-determinant}
and the orientation-blind inverse classification in
\Cref{thm:spectral-inverse} are tested on random protocols and complex spectral parameters with maximum
absolute error $5.34\times10^{-15}$.  The Bessel--Wilson heat trace
\eqref{eq:heat-wilson} agrees with direct diagonalization to
$1.78\times10^{-15}$.
\item A four-dimensional target monodromy with a two-dimensional fixed space
has numerical kernel dimension two.  For $L=9$, the numerical and predicted
gaps agree to $3.0\times10^{-16}$.
\item The history map \eqref{eq:history-isometry} is isometric, lies in the
kernel, and has uniform clock marginal to errors below
$8.4\times10^{-16}$.
\item A ground--first-excited superposition saturates the leakage estimate
\eqref{eq:leakage-bound} to numerical precision.
\item The projective-order example above compresses a 15-event cycle to five
events.
\item A classified clock automorphism with an arbitrary random event
permutation satisfies $W_{\sigma,R}\Jhist=\Jhist R$ to
$5.5\times10^{-16}$.  Independent random fiber gauges, although kinematically
allowed, leak from the coherent code with operator norm $0.977$.
\item A random finite-data instance satisfies the Procrustes--Gram stability
bound \eqref{eq:procrustes-bound}; the squared alignment error is $0.462$
against a trace-norm upper bound $9.263$.
\end{enumerate}
The machine-readable results are included as
\texttt{verification\_results.json}.  These computations test algebraic
implementation and finite instances; they do not replace the proofs.

The most direct falsification tests for the paper are equally explicit:
\begin{enumerate}[label=(\alph*)]
\item a cyclic protocol whose spectrum is not determined by its monodromy
eigenphases would refute \Cref{thm:complete-spectrum};
\item a sufficient deterministic clock quotient smaller than
$\Z_L/{\sim_\A}$ would refute \Cref{thm:predictive-quotient};
\item a clock-algebra normalizer preserving the history code but not of the
form \eqref{eq:physical-clock-change} would refute
\Cref{thm:sharp-clock-rigidity};
\item mutually inverse nonunitary CPTP maps on full matrix algebras would
refute \Cref{thm:channel-rigidity}.
\end{enumerate}

\section{Discussion}
\label{sec:discussion}

The cyclic history Hamiltonian has an exact geometric interpretation.  It is a
connection Laplacian on a one-dimensional closed complex, and the monodromy is
its Wilson loop.  This viewpoint yields more than a reformulation.  It removes
the finite-order assumption, gives the full spectrum in one expression, and
turns failure of history closure into a quantitative holonomy frustration.
The $L^{-2}$ clock scaling is explicit, while an anomalously small monodromy
phase can create a still softer branch.

The predictive quotient makes clock minimality operational.  A clock is not
minimal because its register has the smallest obvious dimension; it is minimal
when no two events are indistinguishable by all accessible conditional
experiments.  For complete access to a homogeneous unitary sector, the relevant
period is projective.  Global phases do not create new physical time events.
With restricted observables, the quotient can be smaller and is determined by
the compressed Heisenberg actions $J_t^\dagger\A J_t$.

The clock-change classification separates three layers:
\begin{center}
\small
\begin{tabular}{@{}ll@{}}
\toprule
Layer & Exact transformation group \\
\midrule
Event algebra only & $U(r)^L\rtimes S_L$ \\
Coherent unordered history & $U(r)\times S_L$ \\
Coherent unoriented cycle & $U(r)\times D_L$ \\
Coherent oriented cycle & $U(r)\times\Z_L$ \\
\bottomrule
\end{tabular}
\end{center}
The collapse from independent fiber gauges to one transported seed unitary is
the content of physical history preservation.  It rules out the claim that an
arbitrary rotation of the clock basis is merely a change of temporal frame.

The Gram-kernel theorem covers a broader situation: two auxiliary clock
constructions may have different coordinates and different redundant
registers, yet represent the same complete intervention history.  After
minimalization, equality of the complete kernel forces a unique unitary
intertwiner.  Partial data give only a stability statement controlled by the
Gram-matrix error and by the conditioning of the chosen tomography.

Several boundaries remain.  First, unsharp POVM clocks require a separate
analysis; they need not admit a sharp event algebra or unitary conditional
evolution.  Second, a one-way channel between clock descriptions may be useful
without being an equivalence.  Third, no thermodynamic arrow is derived here.
The exact propagation law is reversible, and entropy production requires a
macroalgebra, a coarse-graining, and a dynamical relaxation theorem.  Fourth,
the finite cyclic construction does not by itself supply a continuum time
observable.  Finite quasi-ideal clocks address that approximation problem by a
different mechanism \cite{Woods2019}.

Within these boundaries, the cyclic unitary history problem studied here is
closed: arbitrary unitary links reduce to monodromy, the spectrum and exact
history sector are known, minimal event content is computable, and exact sharp
clock covariance is classified rather than postulated.

\section{Conclusion}

A finite cyclic quantum history is controlled globally by one unitary:
its monodromy.  The associated history Hamiltonian is gauge equivalent to a
single twisted edge, and its complete spectrum is the set of shifted cycle
Laplacian branches \eqref{eq:exact-eigenvalues}.  Exact histories are precisely
monodromy-fixed vectors; the gap and frustration are exact functions of the
monodromy phases.

Operational minimality then removes redundant clock events, with projective
order replacing ordinary unitary order under full data access.  Exact sharp
clock changes are not arbitrary basis rotations.  They are the block-monomial
transformations that preserve the coherent history code, giving
$U(r)\times\Z_L$ in the oriented case.  Reversible channel covariance adds no
larger class because mutually inverse finite quantum channels are unitary.
Complete minimal history realizations are likewise unique up to one unitary,
with a quantitative finite-data stability bound.

These results provide an exact finite foundation on which approximate,
unsharp, continuum, and many-body clock limits can be studied without
confusing kinematic relabeling, predictive compression, and physical
covariance.

\begin{acknowledgments}
The author thanks the developers of NumPy, SciPy, Matplotlib, and the
open-source \LaTeX{} ecosystem used for the reproducibility checks and
manuscript production.
\end{acknowledgments}

\bibliographystyle{apsrev4-2}
\bibliography{references}

\end{document}